\newtheorem{definition}{Definition}
\newtheorem{lemma}{Lemma}
\newtheorem{prop}{Proposition}
\newcommand{\ketbra}[2]{\ket{#1}\!\!\bra{#2}}
\newcommand{\tr}{\textup{tr}}
\newcommand{\be}{\begin{equation}}
\newcommand{\ee}{\end{equation}}
\newcommand{\nn}{{\mathbbm{N}}}
\newcommand{\nnp}{{\mathbbm{N}_{> 0}}}
\newcommand{\nno}{{\mathbbm{N}_{\geq 0}}}
\newcommand{\rr}{{\mathbbm{R}}}
\newcommand{\cc}{{\mathbbm{C}}}
\newcommand{\zz}{{\mathbbm{Z}}}
\newcommand{\me}{\mathrm{e}}
\newcommand{\mi}{\mathrm{i}}
\newcommand{\id}{{\mathbbm{1}}} 
\newcommand{\idch}{{\mathcal{I}}}  
\newcommand{\lo}{o}
\newcommand{\Mspace}{\vspace{0.18cm}} 
\def\ba#1\ea{\begin{align}#1\end{align}} 
\newcommand{\doc}{\text{paper}}  
\newcommand{\App}{\text{Appendix}}
\newcommand{\app}{\text{appendix}}
\newcommand{\cl}{{\textup{C}}}
\newcommand{\re}{{\textup{R}}}
\newcommand{\reT}{{\textup{R}_{\textup{T}}}}  
\newcommand{\reI}{{\textup{R}_{\textup{I}}}} 
\newcommand{\mpw}[1]{{[\color{blue}#1]}}
\newcommand\mpwS[1]{{\let\helpcmd\sout\parhelp#1\par\relax\relax} }
\long\def\parhelp#1\par#2\relax{%
	\helpcmd{#1}\ifx\relax#2\else\par\parhelp#2\relax\fi%
}
\newcommand{\mpwSC}[1]{{\color{blue}\mpwS{#1}}}
\newcommand{\proj}[1]{\ketbra{#1}{#1}}
\newcommand{\circsum}[2]{\overset{#2}{\underset{n= #1}{\mathlarger{\bigcirc} }}}
\renewcommand{\vec}[1]{\boldsymbol{#1}}
\newcommand*{\cI}{\mathcal{I}}
\newcommand*{\cM}{\mathcal{M}}
\newcommand*{\clock}{clockwork}
\newcommand*{\Clock}{Clockwork}
\newcommand*{\bk}{\textup{ct}}
\newcommand*{\linear}{L} 
\newcommand*{\lineaR}{L} 
\newcommand*{\bounded}{\mathcal{B}} 
\newcommand*{\comP}{\circ} 
\newcommand*{\fnr}[1]{\iffalse #1 \fi}
\newcommand*{\NQ}{N_L}
\newcommand*{\m}{{\NQ}}
\begin{document}
	
\title{
Autonomous Ticking Clocks from Axiomatic Principles}

\author{Mischa P. Woods}
\affiliation{Institute for Theoretical Physics, ETH Zurich, Switzerland}

\begin{abstract}
There are many different types of time keeping devices. We use the phrase \emph{ticking clock} to describe those which | simply put | ``tick'' at approximately regular intervals.
Various important results have been derived for ticking clocks, and more are in the pipeline. It is thus important to understand the underlying models on which these results are founded. The aim of this \doc{} is to introduce a new ticking clock model from axiomatic principles that overcomes concerns in the community about the physicality of the assumptions made in previous models. The ticking clock model in~\cite{woods2018quantum} achieves high accuracy, yet lacks the autonomy of the less accurate model in~\cite{thermoClockErker}. Importantly, the model we introduce here achieves the best of both models: it retains the autonomy of~\cite{thermoClockErker} while allowing for the high accuracies of~\cite{woods2018quantum}. What is more,~\cite{thermoClockErker} is revealed to be a special case of the new ticking clock model.
\end{abstract}

\maketitle

\section{Introduction and basics}\label{sec:intro}

Clocks form part of our everyday lives. Understanding their fundamental limitations is an interesting and rich theoretical problem of study which may yield important design principles
 for improved future clocks. However, results pertinent to the performance of clocks are only of relevance if the theoretical clock models underpinning them capture the relevant properties. Therefore, understanding the clock models which underlie the results about clock performance, is as important as the results themselves.\Mspace

Before discussing our findings and their motivation, let us first describe two different types of time keeping devices in order to set the scene for this work. We coin the phrase \emph{ticking clocks} to refer to one type and call the other \emph{stopwatches}. In the literature, both devices are often simply referred to as ``clocks", e.g.~\cite{SaleckerWigner58,Peres80,OptimalStopwatch,Erker,RaLiRe15,WSO16,thermoClockErker,woods2018quantum,Shishir,Yuxiang,PhysRevA.101.042116,2019arXiv191200033H}, yet it is worth introducing distinct names due to their different character. A stopwatch is a device which measures the elapsed time between two external events. There will be a starting time (e.g.\ the beginning of a race) and a stopping time (e.g.\ when the winner crosses the finish line). The stopwatch will attempt to measure the elapsed time. The earliest types of quantum clocks considered in the literature were of this form~\cite{pauli1,pauli2}. These are also the types of clocks one often considers in a metrology setting, since it is equivalent to measuring a phase. However, the action of measuring the stopwatch disturbs its internal dynamics, thus changing the outcome statistics of later time measurements.\Mspace

On the other hand, one can consider a ticking clock which, roughly speaking, is a device which emits ticks at approximately regular intervals. A typical wall clock is a good classical example. Here one can listen or watch the clock face and will know in real-time when it ticks. Analogously to the above example, we will want our mathematical formulation of the ticking clock to allow continuous observations of whether it has ticked or not; without affecting its internal dynamics. Since there is no such requirement imposed on stopwatches, ticking clocks and stopwatches require very different mathematical formulations.\Mspace

Ticking clocks and stopwatches are also physically very distinct objects. The following two examples illustrate this point quite nicely. Firstly, consider a race and measuring the elapsed time between the winner leaving the starting line and crossing the finishing line with a stopwatch. This task can also be carried out by a ticking clock, at least to an accuracy to within plus or minus the time between two consecutive ticks. However, what about if you arranged to meet a friend at a given location at, say, 13:00h tomorrow? If you were only equipped with a stopwatch with no other time reference, you would hopelessly fail to be on time. The reason for this negative predicament, is that you would have no external signal (like the winner crossing the finish line in the previous example) to know \emph{when} to stop your stopwatch | when you eventually press the ``stop button", it may indicate that only 1 second has passed or maybe one week. One may hope to remedy this predicament by resetting their stopwatch immediately after it was stopped; and trying again while keeping a record of the previous outcome. However, this would only lead to a finite number of completely irregular instances when you would know what the time was. Consequently, you would almost surely be very late for the meeting with your friend.\Mspace

The above hypothetical example involving the stopwatch, while conveying an important point, is a bit far fetched from our everyday experience since we do, in fact, always have access to ticking clocks | albeit bad ones | such as the visual difference between day and night. To study such scenarios, one could investigate a different type of time keeping device formed by combining a stopwatch and a ticking clock to take advantage of the best properties of both. Atomic clocks are a good example of such devices. We will leave their study to future work.\Mspace

If either the stopwatch or ticking clock is quantum mechanical in nature, then from a mathematical perspective, both of these devices output information from the clock on Hilbert space $\mathcal{H}_\cl$ to the ``outside". {In the case of a ticking clock, it is advantageous to make} this information transfer explicit within the model by including a register. {At a conceptual level, the temporal information stored in the register is accessible via measurements which, ideally, should not disturb the dynamics of the \clock. This latter property is important since it safe-guards against inadvertently hiding potential disturbances to the \clock{} in measurements on the register.} In the case of a stopwatch, this information retrieval from the clock via a POVM is passive, i.e.\ its retrieval is triggered by an \emph{external} signal, and the clock reacts passively to the measurement~\cite{OptimalStopwatch}. This is in contrast to the information transfer to the register in the case of a ticking clock, in which the information transfer is triggered \emph{internally} by the clock mechanism itself, with no help from external triggers. Both stopwatches and ticking clocks can be modelled by multipartite Hilbert spaces. The most two common elements we will discuss concern the bipartition $\mathcal{H}_\cl\otimes\mathcal{H}_\reT$. In keeping with the terminology of~\cite{RaLiRe15}, we will refer to the space $\mathcal{H}_\cl$ as the \emph{\clock} while $\mathcal{H}_\reT$ will be called the \emph{register}. In both cases, one can define a one-parameter channel $\cM^{t}_{\cl\to\cl}:\,\linear(\mathcal{H}_\cl)\to \linear(\mathcal{H}_\cl)$, where $t\in\mathcal{S}_\bk$ is \emph{coordinate time}. {The exact nature of the register depends heavily on the particulars of the model, and will become less opaque in the coming sections.}\Mspace

In these models, one should think of coordinate time as some unknown parameter which increases as time advances. It is used as a bookkeeping parameter, i.e.\ it is assumed that the channels $\cM^{t}_{\cl\to\cl}(\rho_\cl)$, $\cM^{\prime\, t}_{\cl\to\cl}(\rho_\cl')$ of two distinct stopwatches or ticking clocks correspond to the state of the clocks at the same ``time''. In these models we assume that {the ticking clock or stopwatch is initiated at a particular time,} (i.e.\ that there is a minimum $t$ for which the channel is defined for; which w.l.o.g., we can set to zero). The motivation is that, as we will see, ticking clocks emit temporal information to the outside in an irreversible fashion. 
Furthermore, time may be fundamentally continuous or discrete. These two separate cases are conveniently modelled by defining $\cM^{t}_{\cl\to\cl}$ for $t\in\mathcal{S}_\bk=[0,\infty)$ or $t\in \mathcal{S}_\bk=(0,\delta, 2\delta,\ldots)$ respectively. Here $\delta>0$ is some fixed parameter which allows (if desired) to define a continuous time clock from a discrete one, by taking the limit $\delta\rightarrow0^+$ in an appropriate way. These two cases will be referred to as discrete coordinate time and coordinate time respectively. Finally, observe that we should not think of the coordinate time as being physical, in the sense that we could define a new coordinate time through a change of variable $t':=f(t)$ for some strictly increasing function $f:\rr\to\rr$ so long as the ticking clocks when parametrised by $t'$ rather than $t$ satisfy the to-be-defined in \cref{sec:new quantum clock def}, axiomatic definition of a ticking clock.\Mspace

For concreteness, the rest of this \doc{} will concern the nature of ticking clocks. Conceptually, the goal of the \clock{} is to provide the timing; changing the state of the register at the right moments | analogously to how the clockwork in a wall clock is the mechanism which moves the clock hands to produce ticks. As such, it should not need any timing from the ``outside''. This physical requirement has been captured mathematically in previous models~\cite{RaLiRe15,woods2018quantum} by requiring that the \clock{} be Markovian (also know as a divisible channel), meaning
\begin{align}\label{eq:divisible}
\cM_{\cl\to\cl}^{t_1+t_2}= \cM_{\cl\to\cl}^{t_1} \circ \cM_{\cl\to\cl}^{t_2},
\end{align}
for all $t_1,t_2\in  \mathcal{S}_\bk$. 
This condition has been justified by considering the opposite scenario: suppose that $\cM_{\cl\to\cl}^t$ were not divisible, i.e.\ \cref{eq:divisible} does not hold for some $t_1,t_2\in  \mathcal{S}_\bk$. 
Then, the channel being applied per unit of coordinate time (discrete or continuous), would have to depend on knowledge of the value of coordinate time itself. In other words, the device may need an additional time reference external to the setup. However, by definition the \clock{} is supposed to contain all sources of timing necessary for the ticking click to function. Requirement \cref{eq:divisible} {can be verified using the techniques discussed in \cite{Rivas_2014}.} This equation is discussed further in \cref{sec:Self-timing}.\Mspace

Demanding \cref{eq:divisible} has some immediate consequences, the most important of which is that
the \clock{} is fully determined at all times by the ``smallest coordinate time step''. In the case of continuous coordinate time, under appropriate continuity assumptions, this reads:
\begin{align} \label{eq_mapDeltadelta}
\cM^{t}_{\cl \to \cl} = \lim_{\substack{\vspace{0.1cm} \delta \to 0^+ \vspace{0.05cm}\\ t/\delta\in\zz }} \bigl(\cM^{\delta}_{\cl \to \cl}\bigr)^{\comP \frac{t}{\delta} }
\end{align}
for all $t\in\mathcal{S}_\bk$. 
In the case of discrete coordinate time, \cref{eq_mapDeltadelta} holds if one does not take the limit $\delta \to 0^+$. The $\comP$ in the power in \cref{eq_mapDeltadelta} represents composition of the channel $\cM^{\delta}_{\cl \to \cl}$ with itself $t/\delta$ times. We will use this notation thought this \doc.
The authors of~\cite{RaLiRe15} construct a ticking clock model by describing how the \clock{} they introduce interacts with a register.\Mspace

\emph{\bf Paper overview:} We review the model~\cite{RaLiRe15} in \cref{sec:the ATG model} followed by discussing its downsides. Then in \cref{sec:basic principles} we discuss two important principles for ticking clocks, followed by the axiomatic definition of a new ticking clock model in \cref{sec:new quantum clock def}. This new model satisfies the conditions introduced in \cref{sec:basic principles}, while overcoming the shortcomings discussed in \cref{sec:the ATG model}. We then discuss some important properties and definitions for the new ticking clock before moving on to \cref{sec:Explicit Ticking Clock Representation}, where we formulate a channel representation for the axiomatically defined ticking clock. We then show how the ticking clock representation admits an autonomous implementation, and prove some properties for its \clock{}. In \cref{sec:Ticking Clock Examples} we show how previous examples of clocks from the literature are either special cases of the new formulation or can easily be adapted to fit into it. In \cref{sec:accuracy} we discuss measures of accuracy for the clock and conclude with a summary of the highlights and outlook in \cref{sec:discussions}.

\section{The ticking clock model of~\cite{RaLiRe15}}\label{sec:the ATG model}

\subsection{Model description}

The authors start by describing their model for a discrete coordinate time ticking clock.\footnote{The authors of~\cite{RaLiRe15} do not use the terminology ``ticking clocks'' and instead refer to their devices as ``clocks''. In keeping with the terminology introduced in this \doc, we will use the former denomination.} The continuous coordinate time ticking clock, is then realised by allowing the discrete time step parameter $\delta$ to tend to zero while demanding a certain continuity condition.\Mspace 
 
We start with their notion of the tick register with Hilbert space $\mathcal{H}_\reT$ for a ticking clock. This is a memory to which the temporal information coming from the \clock{} is recorded. The register is formed by a tensor product space, $\mathcal{H}_\reT=\mathcal{H}_{\re_1}\otimes\mathcal{H}_{\re_2}\otimes\mathcal{H}_{\re_3}\otimes\ldots$ over local registers $\mathcal{H}_{\re_i}$ which are all isomorphic to a fixed register $\mathcal{H}_\reI$.\Mspace 

The authors define a channel $\cM^{\delta}_{\cl\to\cl\reI}:\lineaR(\mathcal{H}_\cl)\to\lineaR(\mathcal{H}_\cl\otimes\mathcal{H}_\reI)$ for some fixed $\delta>0$. This gives rise to the state of the register after $N$ applications of the channel ($N$ discrete coordinate time steps); denoted
$\rho_\reT(N)=\rho_{\re_1\re_2\re_3\dotsb}$. Its local states are given by
\begin{align}\label{eq: rhp R i}
\rho_{\re_l}:= \begin{cases}
\tr_\cl\left[ \cM^{l \delta}_{\cl\to\cl\reI}\left( \rho_\cl^0 \right) \right], &\mbox{for } l=1,2,\ldots,N,\\
\rho_{\re_l}^0, &\mbox{for } l=N+1,\ldots\end{cases}
\end{align}
where $\rho_{\re_l}^0$ is the $l^\text{th}$ initial local state, and $\cM^{l\delta}_{\cl\to\cl\reI}$ is defined recursively by applying the channel of the \clock{} $l$ times:  $\cM^{l\delta}_{\cl\to\cl\reI} :=\cM^{\delta}_{\cl\to\cl\reI}\left( \tr_{\reI}\left[\cM^{(l-1)\delta}_{\cl\to\cl\reI} \left(\rho_\cl^0\right)\right]\right)$, $l\in\nnp$.\Mspace

As discussed in \cref{sec:intro}, every application of the channel $\cM^{\delta}_{\cl\to\cl\reI}$ needed in the construction of \cref{eq: rhp R i} corresponds to one time step of discrete coordinate time. {However, \cref{eq: rhp R i} does not yet constitute the ticking clock model in \cite{RaLiRe15}; it requires one more ingredient | a so-called \emph{gear system}. It has a corresponding channel, called the gear system channel, denoted $G_{\reT\to\reT}$, which moves every local register site to the left by one local site in-between every application of the \clock{} channel; see \cref{Fig:new fig}.\Mspace

As depicted in \cref{Fig:new fig}, we can think of the \clock{} initially interacting with the 1st register, with all the other registers to the right. It is convenient to only keep track of the register which the \clock{} is currently acting on and those which are to its right. Keeping to this convention, the gear system channel $G_{\reT\to\reT}$ when applied to a product register state 
\begin{align}\label{eq:prod state}
\rho_\reT=\rho_{\re_{1}}^{(1)}\otimes\rho_{\re_2}^{(2)}\otimes\sigma_{\re_3}^{(3)}\otimes\ldots
\end{align}
$m$ times achieves $G_{\reT\to\reT}^{\,\comP m}(\rho_\reT)=\rho_{\re_{1+m}}^{(1+m)}\otimes\rho_{\re_{2+m}}^{(2+m)}\otimes\rho_{\re_{3+m}}^{(3+m)}\otimes\ldots$.\Mspace

The gear system is an integral part of the ticking clock, since without it, the \clock{} would not be able to access all the local registers; see \cref{Fig:new fig}.} It may be a mechanical system such as a rack and pinion, or non mechanical such as a kinetic degree of freedom associated with the register, turning it into flying qubits on a line in rectilinear motion.\Mspace 

Thus the gear system moves the register along as if it were on a conveyor belt in the following sense: for the 1st application of the channel $ \cM^{\delta}_{\cl\to\cl\reI}$, the \clock{} interacts with the 1st register $\mathcal{H}_{\re_{1}}$ and the registers are then instantaneously moved to the left by one local register site via the gear system so that the \clock{} now interacts with the 2nd register $\mathcal{H}_{\re_2}$ for the second application of the channel $\cM^{\delta}_{\cl\to\cl\reI}$. The gear system then moves all the register sites by one site to the left as before and the process is repeated indefinitely. The local states of the register $\tr_\cl\left[ \cM^{l \delta}_{\cl\to\cl\reI}\left( \rho_\cl^0 \right) \right]$, generated via the  $l^\textup{th}$ application of the channel, determine whether a tick occurred or not. The intuition is that the \clock{} is releasing some temporal information at every application of the channel, so that after a sufficiently large number of applications of the channel it will contain too little temporal information to be useful and the register states $\rho_{\re_l}$ for sufficiently large $l$, contain very little temporal information. The registers could contain tick/no-tick information by having the channel write ``0'' to the $l^\text{th}$ register, $\tr_\cl\left[ \cM^{l \delta}_{\cl\to\cl\reI}\left( \rho_\cl^0 \right) \right]=\ketbra{0}{0}_\reI$, in the case of no-tick, or a ``1'' in the case of a tick, $\tr_\cl\left[ \cM^{l \delta}_{\cl\to\cl\reI}\left( \rho_\cl^0 \right) \right]=\ketbra{1}{1}_\reI$. Here $\ket{0},\ket{1}$ are two orthogonal states.\Mspace

The authors then introduce a continuous coordinate ticking clock by demanding that the channel $\tr_{\reI}\left[\cM^{\delta}_{\cl\to\cl\reI}\right]$ satisfies an $\epsilon$-continuity condition. {That is to say, there exists $\epsilon:\rr\to\rr$ such that,}
\begin{align}
\left\| \tr_{\reI}\left[\cM^{\delta}_{\cl\to\cl\reI}\right] - \idch_\cl \right\|_\diamond \leq \epsilon(\delta),
\end{align}
where $\epsilon(\delta)\rightarrow 0^+$ as $\delta\rightarrow 0^+$. $\idch_\cl$ is the identity channel on the \clock{}, and $\| \cdot\|_\diamond$ is the diamond norm. The authors then specify that one applies the channel $\cM^{\delta}_{\cl\to\cl\reI}$ a number of times which is proportional to $\delta \epsilon$ with $\delta$ of order $1/\epsilon$, so as to achieve non-trivial ticking clock dynamics. The continuum limit case was further studied in~\cite{woods2018quantum} with a few additional physically motivated constraints introduced. See \cref{Fig:gearSystemClock}~a) for a depiction of the combined \clock{} and gear system channels.\Mspace

\begin{figure}\includegraphics[scale=0.5]{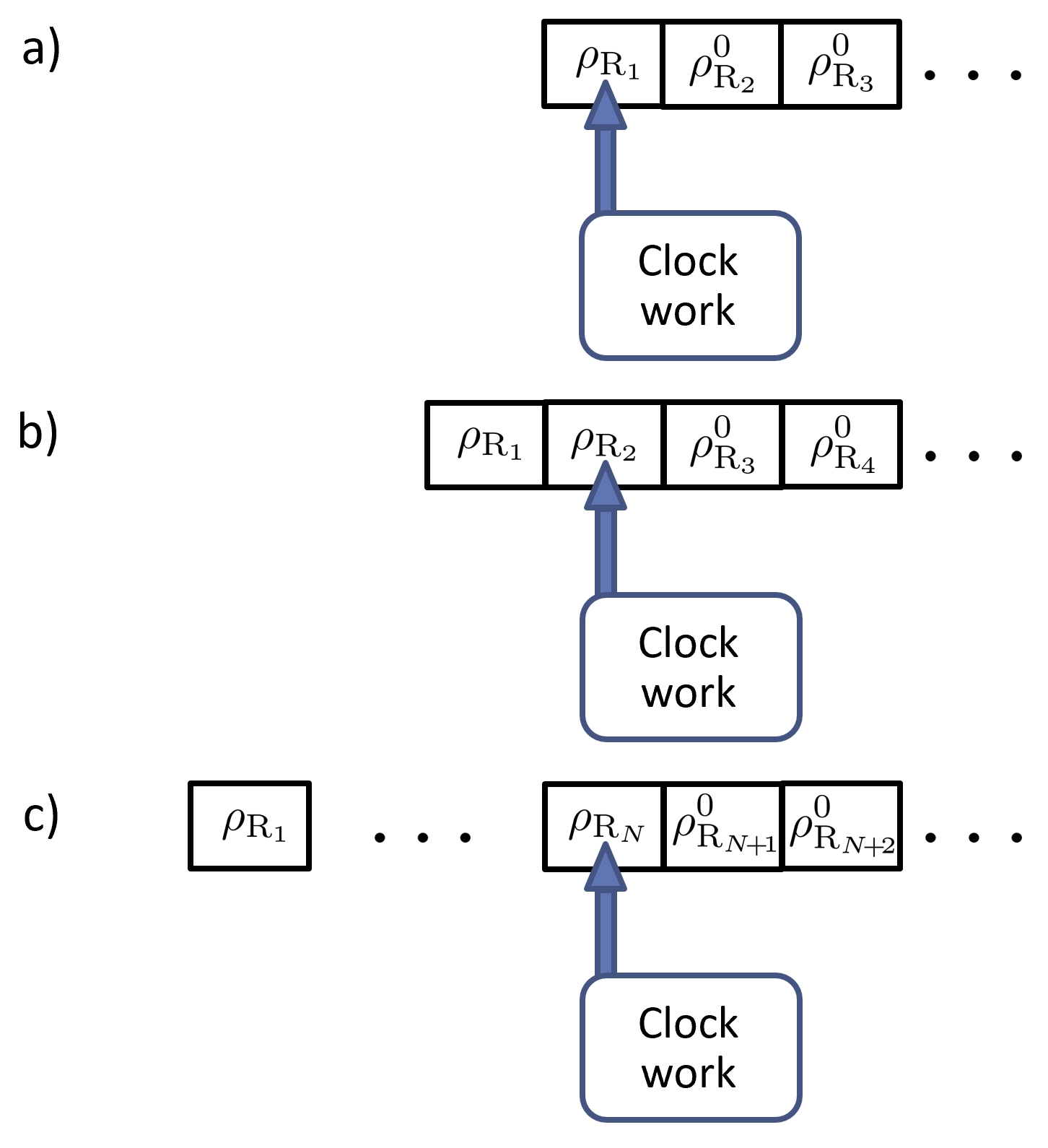}
	\caption{{In figures a) to c), square boxes indicate the local register sites while the \clock{} acting on a local register site is indicated via a blue arrow.\\ {\bf a)} Depiction of the ticking clock for $t=\delta$.   The clockwork channel is applied once and the output on the register, $\rho_{\re_1}$, written to the 1st register site.\\
	{\bf b)} Depiction of the ticking clock for $t=2\delta$. Observe that the register has been shifted by one local register site to the left via an application of the gear system channel and $\rho^0_{\re_2}$ has been transformed into $\rho_{\re_2}$ via one application of the \clock{} channel.\\
	{\bf c)} Depiction of the ticking clock for $t=N\delta$. Observe that the register has been shifted by $N$ local register sites when compared with a), the first $N$ local register sites have been written to and the \clock{} channel has been applied $N$ times in total.
	}}\label{Fig:new fig}
\end{figure}

\begin{figure}\includegraphics[scale=0.456]{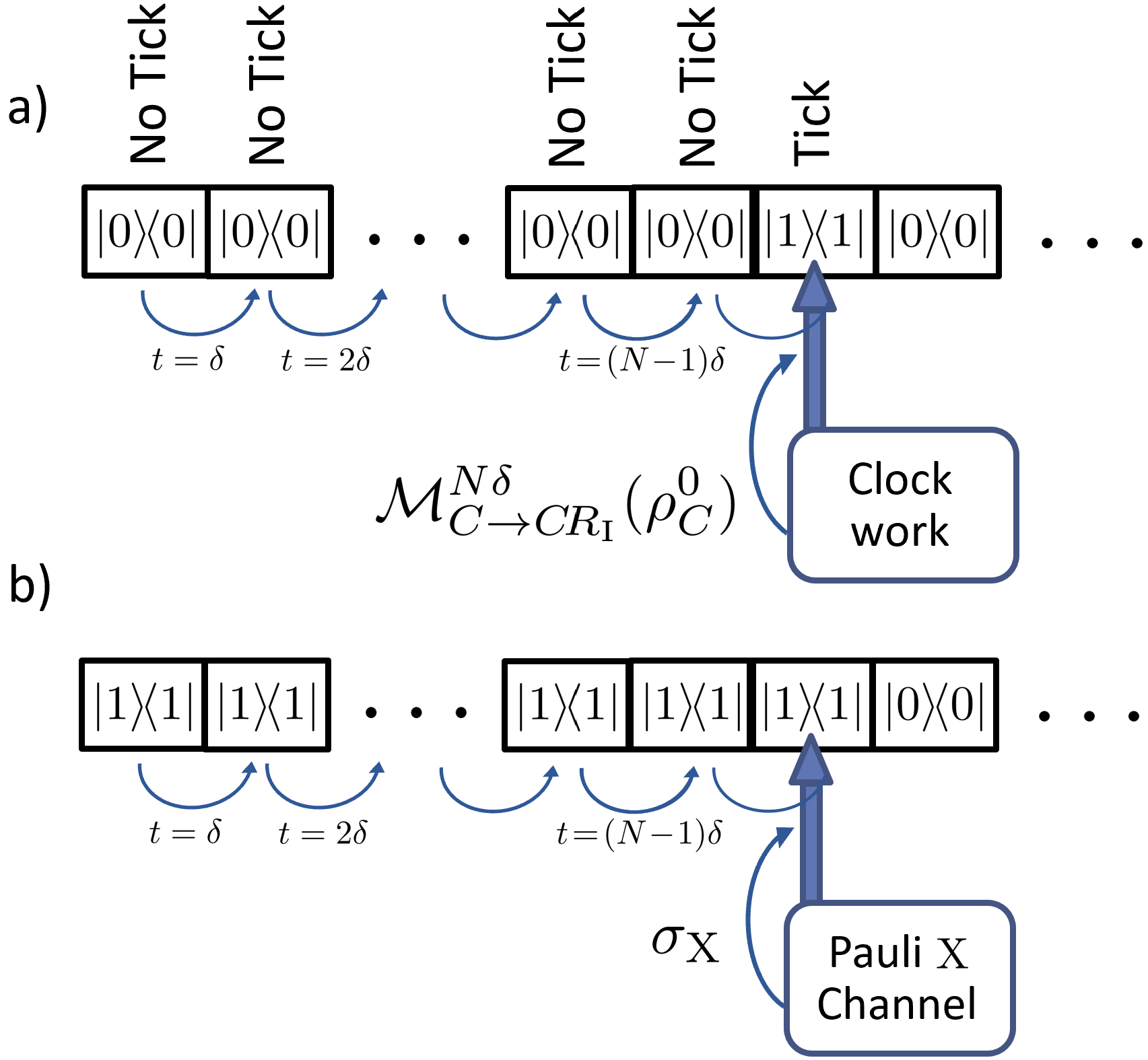}
	\caption{{\bf a)} Illustration of the ticking clock model in~\cite{RaLiRe15}: All the local register site qubits $\sigma_{\re_1},\sigma_{\re_2}\ldots$ are initially set to $\proj{0}$. The channel $\tilde \cM^{\delta}_{\cl\reT\to\reT}:=G_{\reT\to\reT}\!\left( \tr_\cl\left[\cM^{\delta}_{\cl\to\cl\reI}\left(\rho_\cl^0\right)\right]\tr_\reI[\,\cdot\,]\otimes\cI_{\reI}(\cdot)\otimes\cI_{\reI}(\cdot)\otimes\ldots \right)$, where $\cI_{\reI}(\cdot)$ is the identity channel, is then applied repeatedly at times $t=\delta,2\delta,3\delta,\ldots$. The state of the register at some fixed time $t>0$ is obtained by setting $N=t/\delta$ and taking limit $\delta\rightarrow 0^+$. The number of 1's corresponds to the number of ticks which have occurred in time interval $[0,t]$.\\ {\bf b)} Same scenario as in a) but now swapping the \clock{} channel $\tr_\cl\left[\cM^{\delta}_{\cl\to\cl\reI}\left(\rho_\cl^0\right)\right]\tr_\reI[\,\cdot\,]$ with the Pauli X channel $\sigma_\textup{X}(\cdot)$ which maps $\proj{0}$ to $\proj{1}$. The register now records the time with zero error, even though the Pauli X channel produces no temporal information | unlike the channel it replaced. We thus see that all the temporal information comes solely from the gear system $G_{\reT\to\reT}$. Alternatively, the same observation holds when using $\proj{1}_\reI\tr_\reI[\cdot]$, rather than $\sigma_\textup{X}(\cdot)$. Even in scenario a), the gear system is functioning as a perfect stopwatch: by counting the number of zeros between ticks, one can determine precisely the coordinate time interval between ticks.}\label{Fig:gearSystemClock}
\end{figure}

\subsection{Drawbacks}


{Unlike the \clock{} channel, for which the resources required for its implementation such as energy or dimensionality are studied, the gear channel is always considered to be a free resource whose implementation is not studied.}
However, if such a gear system channel existed, rather than applying it {in tandem with} the \clock{} channel $ \cM^{\delta}_{\cl\to\cl\reI}$, one could arguably apply it {in tandem with} a much simpler channel | bypassing the \clock{} altogether | and achieve an \emph{idealised ticking clock}. In this instance, this means a ticking clock for which $\tr[\rho_\reT(t) \rho_\reT(t')]=\hat {\delta}(t-t')$ for all $t,t'\in  \mathcal{S}_\bk$, where $\hat \delta(\cdot)$ is the Dirac-delta function in the continuous coordinate time limit, and a Kronecker-delta in the discrete coordinate time case. {Physically speaking, this means that the state of the register at any two different coordinate times are orthogonal to each other. Consequently, the coordinate time at any given instance can be determined exactly from the registers via measurement.} {See \cref{Fig:gearSystemClock}~b) for an illustration of this drawback.}\Mspace

 This highlights the 1st drawback with such a model: the gear system channel | while it is not supposed to contain temporal information | actually can function like an idealised ticking clock. While this argument is rather indirect, the following argument is direct in the sense that it applies to all ticking clocks of the authors using the gear system in conjunction with the \clock{} as intended.\Mspace
 
If one has a ticking clock, and no other resource, then they should arguably \emph{not} be able to determine the precise time between ticks | to do that, one would need an additional time keeping device, such as a  very precise stopwatch. However, by simply counting the number of zeros between the ones in the register, one can determine precisely the time between ticks. This is a direct consequence of the gear system moving the register along by one qubit sequentially in perfect tandem with the passing of coordinate time. Observe also that this holds independently of how regular the ticks are | it could be a ticking clock which is very accurate and the ticks occur at highly regular intervals, or very imprecise with ticks occurring randomly with respect to coordinate time. {It is also important in this argument that one can measure all the local registers to access all the zeros and ones. This is an assumption in their model. In the continuum limit, this would require accessing an infinite amount of information. How one would do so, is another open question.}\Mspace

One might hope to remedy these {drawbacks} by simply removing the gear system {altogether} and allowing the \clock{} to always write to the same initial qubit register at all times. While this indeed means that the number of zeros (``no-ticks'') emitted by the \clock{} between the ones (``ticks'') is now not recorded in the register as one would like, the state of the register in the instances when ticks occur will now be the same regardless of how may times the clock has ticked. Likewise, the register will also be in the same state (\,$\proj{0}_\reI$) between any two consecutive ticks. As such, in the continuous time limit, the register will be in the state $\proj{1}_\reI$ a measure zero amount of time, and in the state $\proj{0}_\reI$ almost always. This behaviour is clearly problematic and at odds with that of familiar ticking clocks, such as a wall clock.\Mspace

Ideally, one satisfactory option would be to assign just one qubit of memory to which all the zeros (``no ticks'') information is written to, and a new register qubit to be allocated to the output of the \clock{} every time the ticking clock ticks. Therefore, at any instance, one would be able to determine the time by simply reading the number of ones in the register, but since the no tick information is always overwritten, one would not be able to determine how much coordinate time has passed between ticks. This hypothetical solution is unfortunately not possible, since it would require the gear system channel $G_{\reT\to\reT}$ to know when the \clock{} is going to tick, yet since it acts solely on the register, it cannot do so. This highlights the difficulty of removing the gear system or altering its behaviour in a beneficial way. In \cref{sec:new quantum clock def}, we will show, via explicit construction, a satisfactory solution. \Mspace

The above described drawback holds true for both the discrete and continuous time coordinate ticking clocks. In the continuous time coordinate case, there is also another drawback related to its memory storage. We will leave its discussion to \cref{sec:Finite memory}.\Mspace

Note that the authors do suggest that a high precision gear system is not needed.
Their argument is based on assuming that the gear system can fail with some probability $p$, where fail means that the gear channel $G_{\reT\to\reT}$, is replaced with the channel $\tilde G_{\reT\to\reT}=p\cI_\reT+(1-p)G_{\reT\to\reT}$, with $\cI_\reT$ the identity channel on $\mathcal{H}_\reT$. However, their reasoning is based on the fact that replacing the channel $G_{\reT\to\reT}$ with this one, incurs (at most) an irrelevant change in the state of the \clock{} at arbitrary coordinate times $t\in \mathcal{S}_\bk=(0,\delta, 2\delta,\ldots)$. However, there are several issues with this approach. On the one hand, no study of the induced change in the register is produced; yet the accuracy of the ticking clocks according to their measure (the Alternative Ticks Game), is solely a function of the register states on $\mathcal{H}_\reT$ in the large coordinate time limit. Second, in the continuous ticking clock limit ($\delta\to 0^{+}$), the gear system moves the register continuously, and any physically motivated gear system may produce errors which are irreconcilable with the error model described above. Furthermore, errors in the gear system are {cumulative}, and if the gear system writes the tick to an incorrect location in the register, it is possible to change the outcome of the alternative ticks game, thus changing its accuracy according to this measure.\Mspace

\begin{figure}[h]
	\includegraphics[scale=0.35]{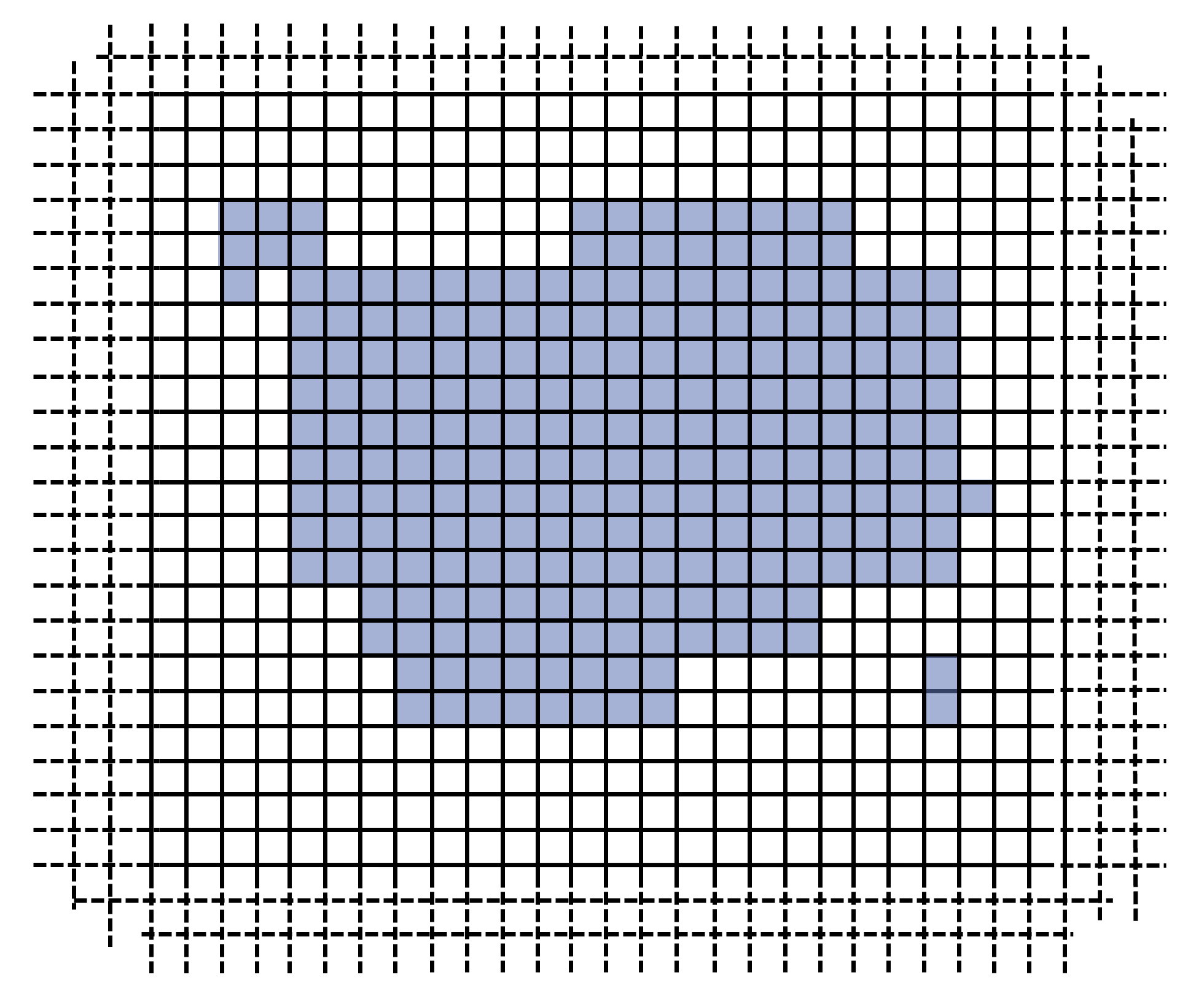}
	\caption{A depiction of a clock's register satisfying the finite running memory condition. The grid represents {a section of} the (possibly infinite dimensional) space $\mathcal{H}_\reT$. {The dotted lines indicate that the grid may continue ad infinitum, while each square in the grid represents one qubit of memory.} If the condition \cref{eq:finite memory condition} can be satisfied, then one can find a subspace {in which $\hat P$ projects onto (the blue squares in the figure)} which includes all | up to an arbitrarily small amount $\epsilon$ | of the changes produced in the register during the time interval $[0,t]$, such that $\hat P_\bot \rho_\reT(0)\hat P_\bot \approx \hat P_\bot \rho_\reT(t)\hat P_\bot$, i.e. that there has been effectively no change in the rest of the register (depicted by the white squares).}\label{Fig:finite memory condition}
\end{figure}

\section{Two basic principles for ticking clocks models: Finite running memory and Self-timing}\label{sec:basic principles}
We now present two basic principles to physically motivate descriptions of ticking clocks. The first is conceptually desirable, but arguably not necessary, while the second is more essential. 

\subsection{Self-timing}\label{sec:Self-timing}
Understanding the underlying timing resources of a ticking clock is an important task. Otherwise, any physical implementation of it may require unaccounted for timing resources. Therefore identifying and quantifying such resources is important. 
A simple counter-example where the timing resources are unaccounted for, is a clock model with unitary dynamics governed by a time-dependent Hamiltonian over the \clock{} and register.
\Mspace 

One should distinguish the concept of self-timing from that of autonomy. An autonomous ticking clock can be thought of as one in which all resources for the clock to run can be explicitly accounted for. An example of an autonomous clock is~\cite{thermoClockErker}. 
Such ticking clocks are clearly also self-timing but the contrary is not necessarily true. The extent to which the ticking clock model presented in this \doc{} is autonomous, will be discussed in \cref{sec:Explicit Ticking Clock Representation,Sec:conclusions}.\Mspace 

We say that the (continuous coordinate time) ticking clock is \emph{self-timing} if 
its  
one-parameter channel on the \clock{} and register,  $\cM^{t}_{\cl\reT\to\cl\reT}$, 
is divisible: 
\begin{align}
\cM^{t_1+t_2}_{\cl\reT\to\cl\reT}=\cM^{t_1}_{\cl\reT\to\cl\reT} \circ \cM^{t_2}_{\cl\reT\to\cl\reT}\label{eqdef:self containment}
\end{align}
for all $t_1,t_2\geq 0$. The reasoning is that if this  {were} not the case, then one could use systems alien to the register and \clock{} to provide timing. In fact, after the identification of the register space $\mathcal{H}_\reT$, the smallest additional space ones needs to include so that \cref{eqdef:self containment} is satisfied (if such a space exists), is a means with which to identify a \clock{} space.\Mspace

One may wonder why we do not demand this divisibility requirement directly for the \clock{} channel, since indeed, the point of the \clock{} is to provide all the timing | the register should be a passive element. However, we will see in \cref{prop:clockwork channel} that while the \clock{} will indeed be divisible in most circumstances, there will be others in which it may not, yet the register will not be providing a source of timing in these cases.\Mspace 

This definition of self-timing differs from that of self-containment from~\cite{RaLiRe15}. In particular, the output in \cref{eqdef:self containment} is on the entire register $\reT$. 
Therefore the self-contained ticking clocks from~\cite{RaLiRe15} are not necessarily self-timing according to the above definition. One can of course make them self-timing by including explicitly the channel for the gear system together with that of the \clock; however, the estimates of the ticking clock's precision in~\cite{RaLiRe15,woods2018quantum} are solely based on {the properties of the \clock{} alone, and thus do not take into account the properties of the gear system}, yet the gear system itself can be used as an idealised clock (as discussed in \cref{sec:the ATG model}).

\begin{figure}[h]
	\includegraphics[scale=0.55]{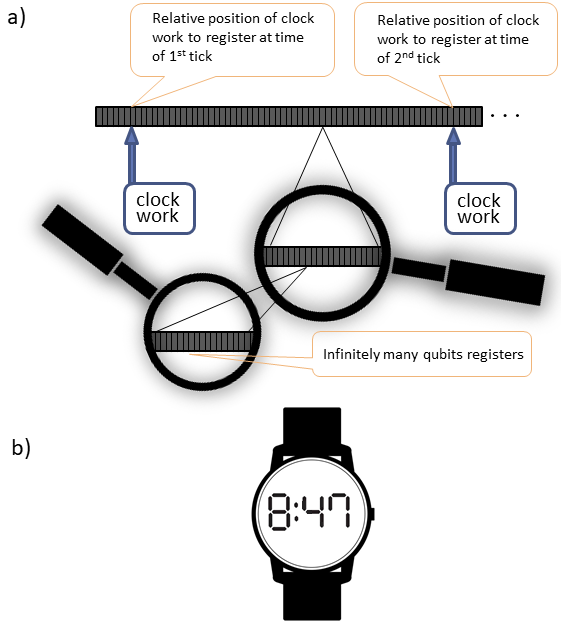}
	\caption{{\bf a)} Register from~\cite{RaLiRe15}: in the continuous coordinate time limit ($\delta\to 0$), during any finite time interval (e.g.\ between two consecutive ticks), the \clock{} needs to interact with an infinite number of qubit registers $\reI$ | one at a time and sequentially. The blue triangle indicates the location of the \clock{} relative to the register site it is writing to. The grey bar represents the infinite number of register sites.\\ {\bf b)} Depiction of a digital wrist watch. Here between seconds (or ``ticks'') the digital display does not change, and during any finite coordinate time interval, the display exhibits a finite number of distinguishable states. This is analogous to the register in the new model presented here: between ticks, the register does not change.}\label{Fig:memory}
\end{figure}

\subsection{Finite running memory}\label{sec:Finite memory}
A requirement for any realistic model of a ticking clock is that it only utilises finite resources per unit of coordinate time. In this section we introduce a definition which captures this notion for the clock's register by demanding that the \clock{} can only invoke a finite change on it per unit of coordinate time.\Mspace

We say that a ticking clock requires \emph{finite running memory} if for every tuple $\big(\epsilon\!>\!0$, $t\!>\!0$, $\rho_\reT(0)\in\mathcal{S}\left(\mathcal{H}_\reT\right)\big)$ there exists a projector $\hat P$ onto a \emph{finite} dimensional subspace $\mathcal{H}_P\subseteq \mathcal{H}_\reT$ such that 
\begin{align}\label{eq:finite memory condition}
\begin{split}
&\big\|  \rho_\reT(t)- \hat P \rho_\reT(t)\hat P -(\hat P_\bot \rho_\reT(t)\hat P+h.c.)\\
&\,\, -\hat P_\bot \rho_\reT(0)\hat P_\bot  \big\|_1\leq \epsilon,
\end{split}
\end{align}
where $\rho_\reT(t):=\tr_\cl\left[ \cM^{t}_{\cl\reT\to\cl\reT}(\rho_{\cl\reT})\right]$ is the state of the register at coordinate time $t$, and $\hat P_\bot:=\id_\reT-\hat P$.  See \cref{Fig:finite memory condition} for a graphical illustration. Note that it is important that the condition holds for all initial register states on $\reT$, even if some initial register states are not relevant for the functioning of the ticking clock. This is because such states are physical and if the ticking clock is a physically realistic model, it should satisfy the finite running memory requirement even in these scenarios | regardless of whether the register is correctly encoding the information from the \clock{} in such cases. This reasoning is analogous to why quantum information theorists demand quantum channels be completely positive rather than just positive | even if they do not intend to apply their channels on entangled states.\Mspace

Observe that the register in the ticking clock model of~\cite{RaLiRe15} does not satisfy the finite running memory requirement since the output on the register is independent of the initial register state, and the \clock{} has to interact with infinitely many copies of the register subspace $\reI$ in any finite proper interval of coordinate time; see \cref{Fig:memory}~a).\Mspace


One may feel that infinite dimensional registers are physical since, indeed, spaces with continuous spectrum are physical. Consider for example the case in which the register is ``a particle in a box'' $\mathcal{H}_\reT=L^2[0,1]$. 
One could in principle store an infinite amount of information in the box by partitioning it into infinitesimally small orthogonal compartments. However, due to technological constraints, such information would not be retrievable nor writable, and a more realistic setup would be to store only a finite amount of information in finitely many partitions | each one, containing an infinite number of orthogonal states. Any resolvable reader would then consist of a projective measure $\{\hat P_l\}_{l\in\nn}$
, where each $\hat P_l$ projects onto one compartment of the register. This way, while each $\hat P_l$ may project onto an infinite dimensional subspace, one can never discern between different orthogonal states on the subspace. Under such a condition, the finite running memory condition given by \cref{eq:finite memory condition} should still hold when $\hat P$ is replaced with any linear combination of a \emph{finite} number of projectors $\hat P_l$. What is more, we {would} also require that the ticking clock cannot \emph{write} an infinite amount of information to every register subspace. One way to ensure this, would be to require that the ticking clock channel written in Kraus {form}, $\cM_{\cl\reT  \to \cl  \reT}^t(\cdot)=\sum_{n} K_n(t) (\cdot) K_n^\dag(t)$, has Kraus operators which admit an expression $K_n(t)=\sum_{l\in\nn}{\hat\gamma_{l,n}(t)\hat P_l}$
, for some operators $\hat\gamma_{l,n}(t)\in\bounded(\mathcal{H}_{\cl\reT})$.\Mspace

For simplicity, the model we introduce in the following section will satisfy the former finite running memory condition; \cref{eq:finite memory condition}.
This is to say, the projectors $\hat P_l$ will project onto finite dimensional spaces. It could however be generalised to contain a register satisfying the latter condition also.

\section{New ticking clock model}\label{sec:new quantum clock def}

We now propose a ticking clock model through a set of physically motivated axioms. It will be self-timing and of finite running memory. Unlike the model discussed in \cref{sec:the ATG model}, it will be a continuous coordinate time model from the outset. 
We discuss its accuracy in \cref{sec:accuracy}.\Mspace


The following describes the extension of a \clock{} channel $\cM^{t}_{\cl \to \cl }$ 
to include the interaction with the register $\reT$. All the conditions regarding how the ticking clock functions will be laid-out in this section. While some of these will be similar to those of \cref{sec:the ATG model}, the new model will not assume any of the conditions nor setup from said section. For example, it will not require a gear system.\Mspace

The tick register here is also different to that described in \cref{sec:the ATG model}. It consists of $N_T+1$ orthonormal states $( \ket{0}_\reT, \ket{1}_\reT,\ket{2}_\reT,\ldots, \ket{N_T}_\reT )$ representing no tick, 1 tick, 2 ticks, \ldots, $N_T\in\nnp$ ticks respectively. While it is clear that any ticking clock with a finite dimensional tick register satisfies the finite running memory condition of \cref{sec:Finite memory}; in this case, the fulfilment of this condition is not inherently related to its finite dimensionality. Indeed, if one takes the infinite dimensional limit $N_T\to\infty$ in the ticking clock model in \cref{sec:Explicit Ticking Clock Representation} which results from the axioms of the current section, the resulting ticking clock \emph{also} satisfies the finite running memory condition of \cref{sec:Finite memory}.\Mspace

To start with, we describe a \emph{periodic} register which resembles the familiar clock which repeats itself whenever the memory is full, e.g.\ every 12 or 24 hours. It naturally satisfies  $\ket{n}_\reT=\ket{n \textup{ mod. } N_T+1}_\reT$, for $n\in\zz$. Later in this section we will consider a variant of this.\Mspace

\fnr{In the following we will only refer to the periodic and cut-off register cases since the infinite memory case can be generated from either register model by taking the $N_T\to\infty$ limit.}

In order for a device to be considered a ticking clock, it should satisfy some conditions on its \clock{} and tick registers. After introducing the following shorthand notation, we discuss 5 such conditions. {Let}
\begin{align}\label{eq:k channel notation def}
\cM^{t,k}_{\cl \to \cl \reT}(\cdot):=&\,\cM^{t}_{\cl \reT \to \cl \reT}\big((\cdot)\otimes \proj{k}_\reT\big):\\
& \mathcal{B}\left(\mathcal{H}_\cl\right) \mapsto \mathcal{B}\left(\mathcal{H}_\cl \otimes\mathcal{H}_\reT\right),\nonumber
\end{align}
{for} $k=0,1,\ldots,N_T$ denote the ticking clock channel with a finite dimensional \clock{} when acting on the $k^\text{th}$ register state.\Mspace

\emph{\textbf{\textup{1)}} Time invariance symmetry condition}:\footnote{In the following and throughout this \doc, we will often omit tensor products with the identity operator on either $\cl$ or $\reT$ for brevity.}
\begin{align}\label{eq:clock invariance}
\begin{split}
&\tr_{\reT} \bigg[ \cM_{\cl  \to \cl  \reT}^{t,k} (\rho_\cl ) \proj{k+l}_\reT\bigg] =\\
& \tr_{\reT} \bigg[ \cM_{\cl  \to \cl  \reT}^{t,k'} (\rho_\cl ) \proj{k'\!+l}_\reT\!\bigg]
\end{split}
\end{align}
for all $t\geq 0$, $\rho_\cl \in\mathcal{S}(\mathcal{H_\cl })$ and $l\in\zz$, $k,k'\in (0,1,\ldots,N_T)$ such that $k+l, k'\!+l\in (0,1,\ldots, N_T)$. 
Physically, this condition means that the dynamics of the \clock{} is invariant under translation of the input and output states of the register by the same amount. This is what one expects from a ticking clock; e.g.\ the probability of a ticking clock ticking 2 hours in the future according to coordinate time (and the state of the clockwork at this time), given that the clock's register was initiated to 3pm, should be the same as if it were initiated to 6pm.\Mspace

The instances of \cref{eq:clock invariance} for which $l$ is negative correspond to the state of the \clock{} when the register is found to have ``un-ticked'', i.e.\ that one finds the register to be in a state corresponding to an \emph{earlier} time than it was initiated to, while coordinate time has increased.\Mspace 

\emph{\textbf{\textup{2)}} The self-timing condition}: For all $t\geq 0$, the 
 ticking clock channel is Markovian:
\begin{align}
\cM^{t_1+t_2}_{\cl\reT  \to \cl\reT }&=\cM^{t_2}_{\cl\reT  \to\cl \reT} \circ \cM^{t_1}_{\cl\reT  \to \cl \reT}\label{eq_extension CR to CR}
\end{align}
for all $t_1,t_2\geq 0$.
This condition implies that no temporal information can come from systems alien to the \clock{} and register. See \cref{sec:Self-timing} for more details.\Mspace

\emph{\textbf{\textup{3)}} The zeroth order condition}:
\begin{subequations}
\begin{align}
&\cM_{\cl\reT  \to \cl\reT }^{0} =\idch_{\cl\reT}\label{eq:id channel clockwork a}\\
&\lim_{t\rightarrow 0^+} \| \cM_{\cl\reT  \to \cl\reT }^{t} -  \idch_{\cl\reT}\| = 0.\, \footnotemark \label{eq:id channel clockwork b}
\end{align}
\end{subequations}
\addtocounter{footnote}{-1}\footnotetext{This equation is known as {uniform} continuity in the semigroup literature~\cite{Pazy1983}.}
This condition simply states that if no time has passed, then no change is permitted in the ticking clock.
\Mspace

The next condition concerns the probabilities of ticks. We denote the probability that {the register is in the state corresponding to the $l^\text{th}$ tick}, at coordinate time $t$, given the $\proj{k}_\reT$ register state as input, by 
\begin{align}\label{eq:tilde p k l delta}
\tilde p_l^{(k)}(t) &:= \tr \bigg[ \cM_{\cl  \to \cl  \reT}^{t,k} (\rho_\cl )\proj{l}_\reT \bigg].
\end{align}

\emph{\textbf{\textup{4)}} The leading order condition}:
\begin{align}
\begin{split}\label{eq:leaidng order condition}
\lim_{t\to 0^+} \frac{\sum_{\underset{l\notin  \{k,  f(k)\}}{l=0\qquad}}^{N_T} \tilde p_{l}^{(k)}(t)}{\tilde p_{f(k)}^{(k)}(t)}=0,
\end{split}
\end{align}
for all $\rho_\cl\in\mathcal{S}\left(\mathcal{H}_\cl\right)$ where $f(k)=k+1 \textup{ mod. }N_T+1$.  
This condition imposes the constraint that the clock cannot ``skip a tick''. More precisely, between a coordinate time at which $k$ ticks have occurred, and a later coordinate time at which $l>k$ ticks have occurred, the probability that ticks $k+1, k+2,\ldots, l-1$ have occurred is one.\Mspace

Conditions 1) to 4) provide the necessary ingredients to define a ticking clock with a periodic register, but before doing so, it is advantageous to consider a distinct scenario which we call \emph{cut-off register}.\Mspace

In this scenario, the register will stop changing when it is full. For {$N_T=60$, an analogous wall clock would be one which you start at 8:00 and it stops ticking one hour later at 9:00.} Both cut-off and periodic register models have clear advantages and disadvantages: While the periodic ticking clock will never stop ticking, one can only determine the time up to multiples of its period (although this can be circumvented by counting the ticks in real-time); but while this issue does not arise in the cut-off case, it is only useful for keeping track of time until it runs out of memory. Both types of register exhibit some common characteristics, see \cref{Fig:memory}~b).\Mspace

While conditions similar to 2) and 4) can be defined for the cut-off register, due to the asymmetry in its boundary conditions, it is complicated to do so. A more direct and intuitive requirement is the following:\Mspace

\emph{\textbf{\textup{5)}} The cut-off register condition}: for every ticking clock channel with a cut-off register $ \cM_{\cl \reT \to \cl \reT}^t $, there exits a ticking clock channel with a periodic register denoted $\tilde \cM_{\cl \reT \to \cl \reT}^t $ and satisfying conditions 1) to 4),  such that in the $ t\to 0^+$ limit:
\begin{subequations}\label{eq:condition 5}
\begin{align}\label{eq:condition 5 a}
\cM_{\cl \to \cl \reT}^{ t,k} (\rho_\cl) = 
\tilde \cM_{\cl  \to \cl \reT}^{ t,k} (\rho_\cl)+ \lo(t)
\end{align}
for $k=0,1,\ldots, N_T-1$ and 
\begin{align} \label{eq:condition 5 b}
\tr_\cl\!\left[\cM_{\cl  \to \cl \reT}^{ t, N_T} (\rho_\cl)\right] = \proj{N_T}_\reT \!+ \lo( t),
\end{align}
\end{subequations}
for all $\rho_\cl \in\mathcal{S}(\mathcal{H_\cl })$ and where $\lo(\cdot)$ is little-o notation. This requirement captures some of the behaviour of the ticking clock with a periodic register, while enforcing the condition that given the last register state $\proj{N_T}$ as input, it can no longer invoke a change in the register | i.e. it ``stops ticking".\Mspace

After these general remarks, we are now ready to state the technical definition of a ticking clock:
\begin{definition}[Ticking clock]\label{def_quantumclock}
A \emph{ticking clock} is a pair $(\rho_{\cl\reT} ^0, (\smash{\cM^{t}_{\cl \reT \to \cl  \reT}})_{t\geq 0})$, with $\rho_{\cl\reT} ^0\in \mathcal{S}(\mathcal{H}_\cl\otimes\mathcal{H}_\reT)$ the state of the \clock{} and register at coordinate time $t=0$, and where the interaction between the \clock{} and register, governed by the channel $\cM^{t}_{\cl \reT \to \cl  \reT}$, satisfies conditions 1) to 4) 
 in the case of a periodic register, and conditions 2) and 5) in the case of a cut-off register.
\end{definition}

Observe that for a ticking clock with a cut-off register, the state of the \clock{} given the final state of the register $\proj{N_T}$ as input, is of no relevance, since it does not affect the state of the register (and hence the tick statistics) anymore. We emphasize this point with the following definition. {
\begin{definition}[\Clock{} equivalence]\label{def:equivalence}
	Two ticking clocks with a cut-off register are said to be \emph{\clock{} equivalent} if the two following conditions are both satisfied:
	
	\noindent {\bf a)} Their underlying ticking clock channels with a periodic register in \cref{eq:condition 5 a}, namely $\tilde \cM_{\cl \reT \to \cl \reT}^t $, are the same in both cases.
	
	\noindent {\bf b)} The states of their \clock{} when inputting the register state $\proj{N_T}$, namely 
	\begin{align}
	\tr_\reT\!\left[\cM_{\cl \to \cl \reT}^{t, N_T} (\rho_\cl)\right],
	\end{align}
	differ for some $t\geq 0$ and $\rho_\cl\in\mathcal{S}\left(\mathcal{H}_\cl\right)$.
\end{definition}
}

At present the registers are a collection of $N_T+1$ orthonormal states without spatial structure. One may furthermore demand that local structure is given to the register states via some distance measure $\text{Dis}\left(\ket{l}_\reT,\ket{m}_\reT\right)\geq 0$ that satisfies $\text{Dis}\left(\ket{l}_\reT,\ket{m}_\reT\right)<\infty$ for all $l,m=0,1,\ldots,N_T$. This requirement is physically motivated by noting that when it is imposed, and the register has a local structure, condition 4) (\cref{eq:leaidng order condition}) implies that the \clock{} does not have to ``travel'' an \emph{infinite} distance in finite time to write the next tick to the register | which would {be} unphysical. Furthermore, one can minimise the ``speed of sound'' in the register by arranging the local sites on the register so that $\text{Dis}\left(\ket{l}_\reT,\ket{m}_\reT\right)=g(|l-m|)$ for some monotonically increasing function $g$ in the case of a cut-off register.\footnote{In the periodic register case, one would use a period version of this.}  The simplest example of such a set-up is when the register is embedded into $\mathcal{H}^{\otimes {N_T+1}}$ where $\mathcal{H}$ is the space of a qubit spanned by $\ket{0}$, $\ket{1}$ and we identify $\ket{n}_\reT=\ket{1}^{\otimes n} \otimes \ket{0}^{\otimes N_T+1-n}$, and $\text{Dis}\left(\ket{l}_\reT,\ket{m}_\reT\right)=|l-m|$; $l,n=0,1,\ldots, N_T$. In this case, at the instance when the  $(k\!+\!1)^\text{th}$ tick occurs, the \clock{} ``flips'' the qubit ``next to'' the $k^\text{th}$ qubit.\Mspace

 Temporal information emitted from a clock in the standard treatment is classical in nature. Therefore, one should foremost consider \emph{classical registers} $\reT$. These are registers {for which the action of the \clock{} on them does not} produce coherence in the fixed basis $\{\ket{0}_\reT, \ket{1}_\reT,\ldots,\ket{N_T}_\reT\}$, when the register input is diagonal in this basis; namely dynamics of the form:
\begin{align}\label{eq:classical reg def}
 \cM_{\cl  \to \cl  \reT}^{t,k} (\rho_{\cl }^0) =\sum_{n=0}^{N_T}\rho_{\cl}^{(n;k)}(t)\otimes\ketbra{n}{n}_\reT,
\end{align}
for all $ k=0,1,\ldots,N_T;\, t\geq 0;\, \rho_\cl\in\mathcal{S}\left(\mathcal{H}_\cl\right)$, where $\rho_{\cl}^{(n;k)}(t)$ are arbitrary subnormalised states on the \clock. Considering dynamics of this form is also equivalent to assuming that the register is measured in the basis $\{\ket{0}_\reT, \ket{1}_\reT,\ldots,\ket{N_T}_\reT\}$ associated with ticks, since in this case any coherence will be destroyed and does not affect the measurement. Therefore, requesting dynamics of the form \cref{eq:classical reg def} is not to be confused with demanding that the register is itself classical, but moreover can be interpreted as requiring that we are only allowed to extract classical information from it. We will therefore take \cref{eq:classical reg def} as the defining property of a classical register: 
\begin{definition}[Classical register]\label{def:classical register}
A ticking clock $(\rho_{\cl\reT}^0, (\smash{\cM^{t}_{\cl \reT \to \cl  \reT}})_{t\geq 0})$ has a \emph{classical register} if the channels $\cM^{t,k}_{\cl \to \cl  \reT}$ are of the form \cref{eq:classical reg def} for all $t\geq 0$, $k=0,1,\ldots,N_T$.
\end{definition}

Classical registers have the advantageous property that they can be ``continuously observed'' without changing the properties of the ticking clock | analogously to how one can continuously look at a wall clock or listen for its ticks without disturbing the dynamics of its clockwork (and hence accuracy). This may come as a surprise for two reasons: For one, clearly the state of the ticking clock in \cref{eq:classical reg def}, before and after measuring the register in the basis $\{\ket{n}_\reT\}$ is different. Secondly, the Zeno effect~\cite{zeno1,zeno2,zeno3} dictates that if a quantum system is continuously measured, then it will stop evolving altogether | which is clearly at-odds with the desired properties of a ticking clock.\Mspace

The solution to the 1st apparent problem is to recall that 
the state of the register is a probabilistic mixture and thus the change in the state due to the register's measurement is due to a change in our knowledge about which state the register is in. This is analogous to the description of any purely classical ticking clock which is not perfectly accurate:  while in every run of the ticking clock in which it is continuously observed, the state of the register will always be known exactly; in order to calculate the statistics associated with its accuracy, one will need the ensemble of all possible ticking clock runs weighted by the probability that each trajectory occurs. The 2nd apparent problem is resolved by showing that the Zeno effect does not apply to continuous measurements of the register when it is of the form \cref{eq:classical reg def}.\Mspace

The following proposition formalises the previous two remarks by showing that one can continuously measure the register without affecting the statistics associated with the probabilistic distribution of ticks. Before stating it, we need to introduce some notation and definitions:

Let $\mathcal{P}_l(\cdot):=\proj{l}_\reT (\cdot) \proj{l}_\reT /\tr[(\cdot)\proj{l}_\reT] : \bounded(\mathcal{H}_\cl\otimes\mathcal{H}_\reT)\to \bounded(\mathcal{H}_\cl\otimes \mathcal{H}_\reT)$ denote the channel which takes any ticking clock state and outputs the state of a ticking clock after measuring the register in the register basis $\{\ket{0}_\reT, \ket{1}_\reT,\ldots,\ket{N_T}_\reT\}$ and finding the register to be in the state $\proj{l}_\reT$. 

\begin{definition}[Measured channels]\label{def:continous measured channel}
	Given a ticking clock $(\rho_{\cl\reT}^0, (\cM^{t}_{\cl\reT \to \cl \reT})_{t\geq 0})$, we call the following channel $\bounded(\mathcal{H}_\cl\otimes\mathcal{H}_\reT)\to\bounded(\mathcal{H}_\cl\otimes\mathcal{H}_\reT)$ a \emph{measured channel:}\footnote{We use the notation 
	$\circsum{1}{N} f_n(\cdot):=f_N \circ f_{N-1}\circ\ldots\circ f_1(\cdot)$.}
	\begin{align}
	\mathcal{CM}^t_{\cl\reT\to\cl\reT}[\vec s_N](\cdot):=
	\circsum{1}{N} \left(\mathcal{P}_{l_n}\circ \cM^{t_n}_{\cl\reT \to \cl \reT}\right)(\cdot),\label{defeq:continous measured channel}
	\end{align}
	where $\vec s_N:= (l_n,t_n)_{n=1}^N$ is the sequence of measurement outcome indices $l_n=0,1,\ldots,N_T$ and times $t_n\geq 0$; s.t. $\sum_{n=1}^N t_n=t$. In the case that $\cM^{t}_{\cl\reT \to \cl \reT}$ has a classical register we call the channel a \emph{classical register measured channel}.
\end{definition}

The channel \cref{defeq:continous measured channel} corresponds to the state of the ticking clock at coordinate time $t$ when the free evolution of the ticking clock was interrupted at times $t_n$ by register measurements with outcomes $\proj{l_n}_\reT$. 
Let $\mathrm{Prob}[{\vec s}_N]$ be the probability that the sequence of outcomes with indices $l_1,l_2,\ldots,l_N$ at times $t_1,t_2,\ldots,t_N$ occurs. We denote the set of all sequences of outcomes at times $(t_1,t_2,\ldots,t_N)=:\vec t$ by
\begin{align}
 \mathbf{\mathtt{S}}_N(\vec t)\!:= \big\{ (l_n,t_n)_{n=1}^N : l_n\in\{0,1,\ldots,N_T\}
 \big\}\!.
\end{align}
\begin{restatable}[Measured register equivalence]{prop}{continuousMeasurements}\label{prop:continous measurements}
For all coordinate times $t_n\geq 0$ s.t. $\sum_{n=1}^N t_n=t$ and for all $N\in\nnp$, the dynamics of any ticking clock with a classical register is equal to that of the ensemble of classical register measured channels, where the ensemble is weighted by the probability of the classical register measured channel occurring:
\begin{align}\label{eq:equivalnce under meres register}
&\cM^t_{\cl\reT\to\cl\reT}(\rho_\cl\otimes\proj{k}_\reT)\\
&=\sum_{{\vec s}_N\in  \mathbf{\mathtt{S}}_N(\vec t)}\mathrm{Prob}[ \vec s_N]\, \mathcal{CM}^t_{\cl\reT\to\cl\reT}[ \vec s_N](\rho_\cl\otimes\proj{k}_\reT)\nonumber
\end{align}
for all $(t_n)_{n=1}^N$, $N\in\nnp$, $t\geq 0$, $k=0,1,\ldots, N_T$, and $\rho_\cl\in\mathcal{S}(\mathcal{H}_\cl)$.
\end{restatable}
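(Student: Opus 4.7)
The plan is to prove the proposition by induction on $N$, exploiting the self-timing condition 2 (the Markov property on $\cM^t_{\cl\reT\to\cl\reT}$) together with the defining property of a classical register (\cref{def:classical register}, \cref{eq:classical reg def}).

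First I would record a simple linear-algebra identity: any (possibly subnormalized) operator $\sigma \in \bounded(\mathcal{H}_\cl\otimes\mathcal{H}_\reT)$ that is diagonal in the register basis, i.e.\ of the form $\sigma = \sum_{l=0}^{N_T} \sigma_\cl^{(l)} \otimes \proj{l}_\reT$ with $\sigma_\cl^{(l)} \geq 0$, satisfies
\begin{align}
\sigma \;=\; \sum_{l=0}^{N_T} \tr\!\big[\sigma\, \proj{l}_\reT\big]\; \mathcal{P}_l(\sigma),
\end{align}
which is immediate from the definition of $\mathcal{P}_l$. The point is that if a state is already block-diagonal in the register basis, measuring in that basis and forgetting the outcome does nothing.

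Base case $N=1$: starting from $\rho_0 := \rho_\cl \otimes \proj{k}_\reT$, \cref{eq:classical reg def} tells us that $\cM^{t_1}_{\cl\reT\to\cl\reT}(\rho_0)$ is diagonal in the register basis. Applying the identity above and identifying $\tr[\cM^{t_1}(\rho_0)\proj{l_1}_\reT] = \mathrm{Prob}[(l_1,t_1)]$ and $\mathcal{P}_{l_1}\circ \cM^{t_1}_{\cl\reT\to\cl\reT} = \mathcal{CM}^{t_1}_{\cl\reT\to\cl\reT}[\vec s_1]$ yields \cref{eq:equivalnce under meres register} for $N=1$.

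Inductive step: by the self-timing condition \cref{eq_extension CR to CR}, I can split $\cM^{t}_{\cl\reT\to\cl\reT} = \cM^{t_N}_{\cl\reT\to\cl\reT}\circ \cM^{t-t_N}_{\cl\reT\to\cl\reT}$. Apply the inductive hypothesis to the inner factor to expand it as a weighted sum of $\mathrm{Prob}[\vec s_{N-1}]\,\mathcal{CM}^{t-t_N}_{\cl\reT\to\cl\reT}[\vec s_{N-1}](\rho_0)$. Each such term ends with a $\mathcal{P}_{l_{N-1}}$, so it is of the form $\tau_\cl \otimes \proj{l_{N-1}}_\reT$, to which \cref{eq:classical reg def} (invoked via the time-invariance condition 1 if one prefers to reduce to $k=0$) again applies, producing a state diagonal in the register basis after $\cM^{t_N}_{\cl\reT\to\cl\reT}$. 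One more application of the identity above, with weights $\mathrm{Prob}[l_N \mid \vec s_{N-1}]$, pushes the sum out to the full sequence and uses $\mathrm{Prob}[\vec s_{N-1}]\cdot \mathrm{Prob}[l_N \mid \vec s_{N-1}] = \mathrm{Prob}[\vec s_N]$ to close the induction.

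The main obstacle is bookkeeping: one must verify that the normalization factor $\tr[\cdot\,\proj{l_N}_\reT]$ produced at step $N$ is exactly the conditional probability of outcome $l_N$ given the history $\vec s_{N-1}$, so that the telescoping product of conditionals collapses to $\mathrm{Prob}[\vec s_N]$ as defined by the measured process. Once the normalizations are tracked carefully through the definition of $\mathcal{P}_l$, the rest is essentially tautological, because the classical-register property guarantees that at every stage the state the next channel acts on is diagonal in the register basis, so inserting or omitting the projective measurement has no effect on the unconditioned state.
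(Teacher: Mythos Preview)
Your proposal is correct and follows essentially the same argument as the paper: both proofs use the classical-register property to write the post-evolution state as $\sum_l \mathtt{Prob}[l,t]\,\mathcal{P}_l\circ\cM^{t}(\cdot)$, then invoke the Markov property (condition~2) to split the channel into $N$ pieces and iterate. The paper presents this as a direct iterative substitution while you phrase it as induction on $N$, but the content is identical; your aside about invoking condition~1 to reduce to $k=0$ is unnecessary since \cref{eq:classical reg def} already holds for all $k$.
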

See \cref{sec:measurement equiavalence} for proof.
A direct consequence of \cref{prop:continous measurements} is that if we choose $t_n= t/N$ followed by taking the limit $N\rightarrow \infty$ for fixed $t$ on the r.h.s. of \cref{eq:equivalnce under meres register}, we are in the regime of continuous measurements proposed in the Zeno effect. However, in this continuous measurement case, \cref{prop:continous measurements} certifies that the ticking clock channel $\cM^t_{\cl\reT\to\cl\reT}(\rho_\cl\otimes\proj{k}_\reT)$ still adequately describes the statistics. Indeed, if the register starts in the state $\proj{k}_\reT$, the probability $\mathrm{Prob}[ \vec s_N]$ specialised to the case of finding the register in the state $\proj{k}_\reT$ for all time $t\in [0,\tau]$, 
for some $\tau>0$ would have to be one if Zeno's mechanism were to hold. We will later see that it is however only true for some irrelevant trivial clocks.\Mspace

For later purposes, it is useful to introduce a notion of a ``classical ticking clock". This notion of classicality is effectively the same as the one introduced in~\cite{woods2018quantum} but stated for the ticking clock introduced in this \doc{} (\cref{def_quantumclock}).

\begin{definition}[Classical ticking clock]\label{def:classical tickinc clock} We call a ticking clock $(\rho_{\cl\reT}^0,(\cM^t_{\cl\reT\to\cl\reT})_{t\geq 0})$ \textup{classical}, if there exists a basis $\{\ket{l}_l\}_{l}$ spanning the \clock{} Hilbert space $\mathcal{H}_\cl$, for which the \clock{} remains incoherent in this basis at all coordinate times:
\begin{align}
	\tr_\reT\left[\cM^{t}_{\cl\reT \to \cl\reT}(\rho_{\cl\reT}^0)\right]=\sum_{l} p_l(t)\proj{l}_\cl,\quad \forall\, t\geq 0.
\end{align}
\end{definition}
Likewise, we call a ticking clock a \emph{quantum ticking clock} if it does not satisfy the classical ticking clock criterion. Thus unless otherwise specified, a ticking clock may be quantum or classical.

\section{Autonomous dynamics}\label{sec:Explicit Ticking Clock Representation}

In this section we formulate a representation of the ticking clock channel $\cM^{t}_{\cl\reT \to \cl \reT}$ which holds if and only if the ticking clock satisfies the axiomatic \cref{def_quantumclock}, up to some stated equivalence. An alternative | more technical in nature | representation is left to \app{} \cref{Sec:Implicit Ticking Clock Representation}. The highly inquisitive reader may want to detour to \cref{Sec:Implicit Ticking Clock Representation} before continuing here, while the more cursory reader may avoid \cref{Sec:Implicit Ticking Clock Representation}.


\begin{restatable}[Explicit ticking clock representation]{prop}{ExplicitTickingClockForm}\label{prop:autonmous Lindbald form}
	The pair $(\rho_\cl^0, (\cM^{t}_{\cl \reT \to \cl \reT})_{t\geq 0})$ form a ticking clock (\cref{def_quantumclock}) with a classical tick register (\cref{def:classical register}), up to \clock{}  equivalence (\cref{def:equivalence}), if and only if there exists a Hermitian operator $H$ as well as two finite sequences of operators $(L_j)_{j=1}^\m$ and $(J_j)_{j=1}^\m$ on $\bounded(\mathcal{H}_\cl)$; which are all $t$ independent, such that for all $t\geq 0$ and $N_T\in \nnp$,
\begin{subequations}
	\begin{align}
	&\cM_{\cl\reT \to \cl\reT}^{t} (\cdot)=\; \me^{t \mathcal{L}_{\cl\reT}}(\cdot),\label{eq:clock dynamics landblad}\\
	&\mathcal{L}_{\cl\reT} \left( \cdot \right) = -\mi[\tilde{H},(\cdot)] + \sum_{j=1}^\m  \tilde{L}_{j} (\cdot) \tilde{L}^\dagger_{j} - \frac{1}{2} \big\{ \tilde{L}^\dagger_{j} \tilde{L}_{j}, (\cdot) \big\}\nonumber\\
	&\,\,\,\,\qquad\qquad + \sum_{j=1}^\m \tilde{J}_j^{(l)} (\cdot) \tilde{J}_j^{(l)\dagger} - \frac{1}{2} \big\{ \tilde{J}_j^{(l)\dagger} \tilde{J}_j^{(l)}, (\cdot) \big\},\label{eq:clock dynamics landblad eq}
	\end{align}
	where the operators are $\tilde{H} = H \otimes \id_\reT$, $\tilde{L}_{j} = L_j \otimes \id_\reT$, $\tilde{J}_j^{(l)} = J_j \otimes O_\reT^{(l)}$, with
\begin{align}
\begin{split}
O_\reT^{(l)}:=& \ketbra{1}{0}_\reT+\ketbra{2}{1}_\reT+\ketbra{3}{2}_\reT+\ldots+\\
& \ketbra{N_T}{N_T-1}_\reT+l\ketbra{0}{N_T}.\label{eq:local O operator}
\end{split}
\end{align}
\end{subequations}
In the cut-off register case $l=0$, while $l=1$ for the periodic register case.
\end{restatable}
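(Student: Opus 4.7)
The plan is to prove both implications separately. For the ``if'' direction (Lindblad form implies ticking clock) I would verify the axioms of \cref{def_quantumclock} directly. The semigroup property of $\me^{t\mathcal{L}_{\cl\reT}}$ yields the self-timing condition (axiom 2) and $\cM^{0}_{\cl\reT\to\cl\reT}=\idch_{\cl\reT}$ (axiom 3a); boundedness of $\mathcal{L}_{\cl\reT}$ on the finite-dimensional \clock-plus-register space gives uniform continuity (axiom 3b). For axiom 1, because $\tilde H=H\otimes\id_\reT$ and $\tilde L_j=L_j\otimes\id_\reT$ act trivially on $\reT$, while $O_\reT^{(l)}$ is a shift in the register basis, the reduced \clock{} state obtained by conditioning on register outcome $\ket{k+l}_\reT$ depends only on $l$ and not on $k$. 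For axiom 4 the only first-order-in-$t$ contribution to the probability of outcome $\ket{k+s}_\reT$ with $s\neq 0$ comes from $\tilde J_j^{(l)}(\cdot)\tilde J_j^{(l)\dagger}$, each of which shifts the register by $+1$ (mod $N_T+1$ in the periodic case), so probabilities for $s\notin\{0,+1\}$ are $\lo(t)$. For the cut-off case, setting $l=0$ makes $\proj{N_T}_\reT$ absorbing at leading order, which verifies axiom 5 with $\tilde \cM$ chosen as the $l=1$ clock sharing the same bulk $H$, $L_j$, $J_j$.

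The ``only if'' direction is the substantive one. I would first combine axioms 2 and 3 with the Lindblad/GKSL theorem for uniformly continuous semigroups (e.g.\ \cite{Pazy1983}) to obtain a bounded generator $\mathcal{L}_{\cl\reT}$ with $\cM^{t}_{\cl\reT\to\cl\reT}=\me^{t\mathcal{L}_{\cl\reT}}$, together with some Hermitian $H'$ and Kraus operators $\{K_j\}$ on $\mathcal{H}_\cl\otimes\mathcal{H}_\reT$. Expanding each $K_j$ in the register basis, I would decompose it into ``shift-$s$'' blocks $K_j^{(s)}$. The classical-register condition forces $\mathcal{L}_{\cl\reT}(\rho_\cl\otimes\proj{k}_\reT)$ to be register-diagonal for every $\rho_\cl$ and $k$; requiring the off-diagonal register components of the Hamiltonian and of the dissipator to vanish makes $H'$ register-diagonal and imposes orthogonality between different shift blocks of the dissipator. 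Using the GKSL gauge freedom $K_j\mapsto\sum_i u_{ij}K_i$ to diagonalise the Choi matrix of the dissipator in the shift index, I can then take each Kraus operator to have a single definite shift $s_j$. Axiom 1 then makes the \clock-side coefficients $k$-independent, giving $H'=H\otimes\id_\reT$ modulo an additive $\id_\cl\otimes D$ term that is invisible on diagonal register inputs and can be dropped, and giving each Kraus $A_j\otimes U^{s_j}$ with $U$ a uniform shift on $\reT$. Axiom 4 applied to the first-order expansion then restricts $s_j\in\{0,+1\}$; collecting the $s_j=0$ operators as $L_j$ and the $s_j=+1$ operators as $J_j$ yields the periodic-register form with $l=1$. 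For the cut-off register I would invoke axiom 5: \cref{eq:condition 5 a} imports the bulk form from the periodic case for $k<N_T$, while \cref{eq:condition 5 b} forces $O_\reT^{(0)}\ket{N_T}_\reT=0$, i.e.\ $l=0$, and \clock{} equivalence (\cref{def:equivalence}) absorbs the remaining freedom in how the channel acts on $\proj{N_T}_\reT$.

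The step I expect to be the main obstacle is the extraction of the clean factorisation $K_j=A_j\otimes U^{s_j}$ with a single shift per Kraus operator from the classical-register constraint: a priori each $K_j$ can mix different shift blocks, and one has to exploit the GKSL gauge freedom together with the positive-semidefiniteness of the dissipator's Choi matrix to block-diagonalise it in the shift index before the tensor-product form becomes manifest. A secondary subtlety is the periodic-versus-cut-off boundary, since the bulk translation-invariance argument breaks at $k=N_T$ in the cut-off case; axiom 5 is what resolves this by importing the bulk structure from a genuinely periodic clock, while \cref{def:equivalence} quotients out the irrelevant boundary ambiguity.
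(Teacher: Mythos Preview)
The paper does not argue at the level of the full GKSL generator. It instead first proves the more technical \cref{lem_clockgenerators}, which characterises ticking clocks with classical register purely through the short-time expansion of the \emph{restricted} maps $\cM^{t,k}_{\cl\to\cl\reT}$ (i.e.\ the action on register-diagonal inputs only). The proof of \cref{prop:autonmous Lindbald form} is then a one-line observation: expanding the explicit Lindblad semigroup \cref{eq:clock dynamics landblad} to first order in $t$ on diagonal register inputs reproduces exactly \cref{eq:M 3 term}, and since \cref{lem_clockgenerators} is an if-and-only-if, the proposition follows. Your ``if'' direction is essentially this verification and is fine.

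Your ``only if'' route is genuinely different and, as stated, has a gap precisely at the step you flag as the main obstacle. The classical-register condition together with axiom~1 constrain $\mathcal{L}_{\cl\reT}$ only on inputs of the form $\rho_\cl\otimes\proj{k}_\reT$; they say nothing about register-off-diagonal inputs, and the GKSL gauge freedom is \emph{not} sufficient to force every full-space Kraus operator into the tensor form $A_j\otimes U^{s_j}$. Concretely, on a two-level periodic register a shift-zero Kraus operator $B\otimes(\proj{0}_\reT-\proj{1}_\reT)$ acts identically to $B\otimes\id_\reT$ on every $\rho_\cl\otimes\proj{k}_\reT$ (so the resulting semigroup satisfies all the axioms and the classical-register condition), yet the two full Lindbladians disagree on $\rho_\cl\otimes\ketbra{0}{1}_\reT$ and are not gauge-equivalent. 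What you can legitimately extract from the axioms is that \emph{on register-diagonal inputs} the generator acts as if the Kraus operators were of the stated tensor form, and hence that there exists a Lindbladian of the form \cref{eq:clock dynamics landblad eq} agreeing with the given one on those inputs. That is exactly what the paper establishes by routing through \cref{lem_clockgenerators}; your approach can be repaired by making the same restriction to $\cM^{t,k}_{\cl\to\cl\reT}$ from the outset, at which point it converges to the paper's argument phrased at the level of the generator rather than the infinitesimal Kraus expansion.
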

See \ref{sec: proof of autonmous Lindbald form prop} for a proof. It follows straightforwardly from a more technical representation discussed in \app\Mspace

Observe that the Lindbladian in \cref{eq:clock dynamics landblad eq} only requires local coupling between the orthogonal register states, according to the distance measure $\text{Dis}(\cdot,\cdot)$ introduced in \cref{sec:new quantum clock def} after \cref{def_quantumclock}. The dynamics of the register also clearly satisfies the finite running memory condition in \cref{sec:Finite memory}.\Mspace

While the ticking clock model in~\cite{RaLiRe15}, in the case of continuous coordinate time $t$, does have a dynamical semigroup representation from the \clock{} and \emph{individual} tick registers to itself, $\lineaR(\mathcal{H}_\cl\otimes\mathcal{H}_{\re_j})\to\lineaR(\mathcal{H}_\cl\otimes\mathcal{H}_{\re_j})$ for all $j\in\nn$, (see~\cite{woods2018quantum}) a dynamical semigroup representation on the \clock{} and the \emph{total} tick register has not been shown to exist. As explained previously in \cref{sec:the ATG model}, since the dynamics of the register is dependent on the details of the gear system in their model, such a formulation would inevitably need to include a description of the gear system used, which, for any realistic gear system, would arguably lead to unaccounted for sources of error.\Mspace

The following proposition shows that an effective Markovian dynamical semigroup for the \clock{} can always be found. It therefore justifies associating the \clock{} with the sole source of timing.

\begin{restatable}[\Clock{} representation]{prop}{clockchannel}\label{prop:clockwork channel}
Consider a ticking clock 
with a classical periodic register (\cref{def:classical register,def_quantumclock}) written in the representation of \cref{prop:autonmous Lindbald form}. Its \clock{} channel, defined via
\begin{align}
\cM_{\cl\to\cl}^{t}(\cdot):=\tr_\reT[\cM_{\cl\reT\to\cl\reT}^{t}((\cdot)\otimes\proj{k}_\reT)]\label{eq:clockwork channel def}
\end{align}
is $k$-independent for all $t\geq 0$ and of the form 
\begin{align}
\cM_{\cl\to\cl}^{t}(\cdot)= \me^{t \mathcal{L}_\cl}(\cdot), \label{eq:clockwork generator}
\end{align}
with $\mathcal{L}_\cl$ equal to the r.h.s. of \cref{eq:clock dynamics landblad eq} under the replacements $\tilde H\mapsto H$, $\tilde L_j\mapsto L_j$ and $ \tilde J_j^{(l)}\mapsto J_j$. What is more, for every ticking clock with a classical cut-off register written in the representation of \cref{prop:autonmous Lindbald form}, there exists a ticking clock which is \clock{} equivalent (\cref{def:equivalence}), such that its \clock{} is $k$-independent and given by \cref{eq:clockwork generator}.
\end{restatable}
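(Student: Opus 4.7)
The plan is to compute the partial trace in \cref{eq:clockwork channel def} directly from the explicit Lindbladian in \cref{prop:autonmous Lindbald form}. First I would use the classical-register structure \cref{eq:classical reg def} to write $\cM_{\cl\reT\to\cl\reT}^{t}(\rho_\cl\otimes\proj{k}_\reT)=\sum_{n=0}^{N_T}\rho_\cl^{(n;k)}(t)\otimes\proj{n}_\reT$, so that $\cM_{\cl\to\cl}^{t}(\rho_\cl)=\sum_n\rho_\cl^{(n;k)}(t)$. Differentiating in $t$ and using linearity, the task reduces to evaluating $\tr_\reT\bigl[\mathcal{L}_{\cl\reT}(\sigma_\cl\otimes\proj{n}_\reT)\bigr]$ sector by sector. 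The $\tilde H$ and $\tilde L_j$ contributions yield $-\mi[H,\sigma_\cl]$ and the $L$-dissipator immediately, since they act as $\id_\reT$ on the register.

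For the periodic case ($l=1$) the key algebraic facts are $O_\reT^{(1)\dagger}O_\reT^{(1)}=\id_\reT$ and $\tr_\reT\bigl[O_\reT^{(1)}\proj{n}_\reT O_\reT^{(1)\dagger}\bigr]=1$ for every $n\in\{0,\ldots,N_T\}$; both follow by direct inspection of \cref{eq:local O operator}. Consequently the $J$-dissipator contributes $\mathcal{L}_\cl^J(\sigma_\cl):=\sum_j J_j\sigma_\cl J_j^\dagger-\frac{1}{2}\{J_j^\dagger J_j,\sigma_\cl\}$ independently of $n$, so $\tr_\reT[\mathcal{L}_{\cl\reT}(\sigma_\cl\otimes\proj{n}_\reT)]=\mathcal{L}_\cl(\sigma_\cl)$. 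Summing over $n$ yields the closed ODE $\frac{\md}{\md t}\cM_{\cl\to\cl}^{t}(\rho_\cl)=\mathcal{L}_\cl\bigl(\cM_{\cl\to\cl}^{t}(\rho_\cl)\bigr)$ with initial value $\rho_\cl$, which is manifestly $k$-independent and integrates to \cref{eq:clockwork generator}.

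For the cut-off case ($l=0$) the same argument succeeds for every $n<N_T$ but fails at $n=N_T$: because $O_\reT^{(0)}\ket{N_T}_\reT=0$ and $O_\reT^{(0)\dagger}O_\reT^{(0)}=\id_\reT-\proj{N_T}_\reT$, both the jump and anti-commutator $J$-contributions vanish there, giving $\tr_\reT[\mathcal{L}_{\cl\reT}(\sigma_\cl\otimes\proj{N_T}_\reT)]=\mathcal{L}_\cl(\sigma_\cl)-\mathcal{L}_\cl^J(\sigma_\cl)$ and hence a spurious source $-\mathcal{L}_\cl^J(\rho_\cl^{(N_T;k)}(t))$ in the clockwork ODE. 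I would remedy this by adding to $\mathcal{L}_{\cl\reT}$ an extra dissipator with jump operators $J_j\otimes\proj{N_T}_\reT$. A direct computation shows these jumps contribute $\mathcal{L}_\cl^J(\sigma_\cl)\otimes\proj{N_T}_\reT$ on the $N_T$ sector and vanish identically on every other sector (since $\proj{N_T}_\reT\proj{k}_\reT=0$ for $k\neq N_T$), while keeping the register in $\proj{N_T}_\reT$.

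Finally I would verify that the modified dynamics defines a cut-off ticking clock (\cref{def_quantumclock}) clockwork equivalent to the original (\cref{def:equivalence}). Self-timing is automatic because the generator is still a Lindbladian; \cref{eq:condition 5 a} is preserved because the added dissipator is identically zero on all $k<N_T$ sectors, and \cref{eq:condition 5 b} follows because the $k=N_T$ evolution is $\me^{t\mathcal{L}_\cl}(\rho_\cl)\otimes\proj{N_T}_\reT$, whose partial trace over $\cl$ equals exactly $\proj{N_T}_\reT$ by trace preservation. Clockwork equivalence then follows because the periodic extension $\tilde\cM_{\cl\reT\to\cl\reT}^{t}$ is determined solely by the unchanged $k<N_T$ infinitesimal behaviour. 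The periodic-case ODE argument then applies verbatim to the modified clock and yields \cref{eq:clockwork generator}. The main obstacle I anticipate is the bookkeeping for the cut-off case, i.e.\ ensuring the chosen modification preserves classicality of the register and does not violate the $k=N_T$ boundary condition; both reduce to the rank-one projector identities used above.
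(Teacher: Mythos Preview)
Your argument is correct and arrives at the same modified cut-off clock as the paper, but the route is genuinely different. The paper works through the implicit representation of \cref{lem_clockgenerators}: it writes the $N$-fold composition as $\sum_{l=0}^{m} M_m^{(l,\delta t)}(\cdot)\otimes\proj{l+k}_\reT$ and proves $k$-independence of the coefficients $M_m^{(l,\delta t)}$ by induction on~$m$, only at the end identifying the generator $\mathcal{L}_\cl=\mathcal{C}_{(1)}+\mathcal{C}_{(2)}$. You instead stay entirely inside the explicit Lindbladian of \cref{prop:autonmous Lindbald form} and compute $\tr_\reT[\mathcal{L}_{\cl\reT}(\sigma_\cl\otimes\proj{n}_\reT)]$ directly, reducing everything to the two identities $O_\reT^{(1)\dagger}O_\reT^{(1)}=\id_\reT$ and $\tr_\reT[O_\reT^{(1)}\proj{n}_\reT O_\reT^{(1)\dagger}]=1$; the closed ODE then yields both $k$-independence and the semigroup form in one stroke. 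For the cut-off case, your added jump operators $J_j\otimes\proj{N_T}_\reT$ produce, on classical-register inputs, exactly the same modified generator as the paper's family $\bar J_j^{(l)}$ (the paper splits $O_\reT^{(0)}$ into its rank-one pieces and appends the $\proj{N_T}$ term, which is algebraically equivalent once the anticommutators are summed). Your approach is shorter and makes transparent \emph{why} the periodic case closes (unitarity of the cyclic shift); the paper's inductive argument has the advantage of tying the result back to the infinitesimal expansion used elsewhere in the \app, but is otherwise more laborious.
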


The proof is constructive and can be found in \cref{sec: proof of clockwork channel prop}. In the case of the cut-off register, the representation used in the proof for which \cref{eq:clockwork generator} holds, has a ticking clock channel whose \clock{} still produces ticks when the input register state is $\proj{N_T}_\reT$, but does not write them to the register. This can be contrasted with the \clock{} equivalent ticking clock with cut-off register representation in \cref{prop:autonmous Lindbald form}. In this case, the \clock{} stops producing ticks when the input register state is $\proj{N_T}_\reT$.\Mspace

{Dynamical semigroups, such as the clockwork channel representation in \cref{eq:clockwork generator}, do not have an inverse channel; and hence ticking clocks emit temporal information in an irreversible fashion.}\Mspace

Dynamical semigroups of the form \cref{eq:clock dynamics landblad} have been shown to have a microscopic description in which the system (which in the present case would constitute the \clock{} and total register) interacts with an infinite dimensional environment on $\mathcal{H}_\textup{E}$ via a time independent Hamiltonian under the appropriate limits. In particular, the Hamiltonian $H_\textup{tot}$ which leads to dynamics \cref{eq:clock dynamics landblad,eq:clock dynamics landblad eq} is of the form
\begin{align}
H_\textup{tot}=H\otimes\id_{\reT\textup{E}} -\tilde{H}_{\cl\reT}\otimes\id_\textup{E}+ \id_{\cl\reT}\otimes H_\textup{E}+ V,\label{eq:Hamilotnian formulation}
\end{align}
where $\tilde{H}_{\cl\reT}\in\bounded(\mathcal{H}_\cl\otimes\mathcal{H}_\reT)$ is tuned to counteract a shift in energy on the \clock{} and register due to interactions with the environment with local Hamiltonian $H_\textup{E}$, while $V$ mediates the interaction between the register, \clock{}, and environment. It takes on the form $V=\sum_n \id_\reT \otimes A^{(L)}_n\otimes B_n^{(L)}+ O_\reT^{(l)}\otimes A_n^{(J)} \otimes B_n^{(J)}$ with $O_\reT^{(l)}$ given by \cref{eq:local O operator}. The $(A^{(L)}_n)_n$ and $(A^{(J)}_n)_n$ terms give rise to the operators $(L_j)_j$ and $(J_j)_j$ respectively while terms $(B_n^{(L)})_n, (B_n^{(J)})_n$ are suitably chosen local terms acting on the environment. 
There are two known types of limiting procedures which one can apply to \cref{eq:Hamilotnian formulation} to achieve dynamics of the form \cref{eq:clock dynamics landblad,eq:clock dynamics landblad eq}. One in which the time-scales of the environment are much shorter than those of the system | the so-called ``weak coupling limit'' | and the other where the time scales are reversed | called the ``singular coupling limit''. See~\cite{Breuer2007} for physical insight into these limits and~\cite{Palmer77} for how to re-scale the interactions to interchange between them. In~\cite{GORINI1978149} it is proven that \emph{all} Lindblad operators in \cref{eq:clock dynamics landblad eq} are achievable via appropriate choice of the terms in \cref{eq:Hamilotnian formulation}, in the singular coupling limit. Finally, observe that the interaction term $V$ in \cref{eq:Hamilotnian formulation} only requires local coupling to the register.


\section{Ticking Clock Examples}\label{sec:Ticking Clock Examples}
In this section we will see how clocks from the literature are either a special case of the ticking clocks introduced here (\cref{def_quantumclock}), or that they can be easily adapted to be of this form. In all three examples, we provide the choice of $H$, $(L_j)_{j=1}^\m$ and $(J_j)_{j=1}^\m$  from \cref{prop:autonmous Lindbald form} for the cut-off register case. While the examples are also valid in the periodic case, they are distinct to those in the literature for this choice. The accuracy of these example clocks will be discussed in \cref{sec:accuracy}.

\subsection{Thermodynamic Ticking Clock}\label{sec:example thermo}
The clock in~\cite{thermoClockErker} is a ticking clock in which population is driven up a $d$ dimensional ``ladder'' via the interaction 
with two thermal qubits maintained at thermal equilibrium at distinct hot and cold temperatures through their coupling to local thermal baths. A tick occurs when the population of the ladder reaches the top via spontaneous emission back to the ground state of the ladder. The tick is recorded in the tick register by a photo-detector. This register is of the same form as the one introduced here: it is classical in nature and is only written to when a tick occurs. It thus satisfies both the self-timing and finite running memory conditions of \cref{sec:basic principles}. The thermodynamical clock in~\cite{thermoClockErker}, is specified by choosing $N_L=4$ with
\begin{subequations}
\begin{alignat}{2}
L_1&=\sqrt{\gamma_h} \sigma_h,\quad &&L_2= \sqrt{\gamma_h \me^{-\beta_h E_h}} \sigma_h^\dag,\\
L_3&=\sqrt{\gamma_c} \sigma_c,\quad &&L_4= \sqrt{\gamma_c \me^{-\beta_c E_c}} \sigma_c^\dag,\\
J_1&= \sqrt{\Gamma} \ketbra{0}{d-1}_w,\quad &&\,J_2=J_3=J_4=0,
\end{alignat}
\end{subequations}
where $\sigma_h, \sigma_c$ are lowering operators for the hot and cold qubits respectively; $\ket{0}_w$, $\ket{d-1}_w$ are the ground and top states of the ladder, and the other coefficients are positive and defined in~\cite{thermoClockErker}. The Hamiltonian takes on the form $H= H_0+ H_\textup{int}$ where $H_0$ is the local Hamiltonian for the qubits and ladder while $H_\textup{int}$ is the three-body interaction between them. Indeed, a simple calculation of the state of the \clock{} $\rho_\cl^{(0)}(t)$ given that the 1st tick has not occurred in the time interval $[0,t]$, using \cref{eq:eqfor COM} and the above parameters yields $\rho^{(0)}_{\cl}(t) = \me^{t \mathcal{L}_{\cl}^{(0)}} (\rho_\cl)$ with

\begin{align}\label{eq:lidblad for thermo}
\begin{split}
\mathcal{L}_{\cl}^{(0)}(\cdot) =&\, \mi \left( \hat H_\textup{eff}(\cdot)- (\cdot) \hat H_\textup{eff}^\dag  \right)\\
 &- \sum_{j=1}^4 \frac{1}{2} \left\{ L_j^\dag L_j, (\cdot)  \right\} + L_j (\cdot)L_j^\dag,
\end{split}
\end{align}

where $\hat H_\textup{eff}= H-\mi\, \Gamma \proj{d-1}_w$. This is in exact correspondence with eq. B4 on page 8 of~\cite{thermoClockErker}. Since the \clock{} is reset to its initial state after each tick, \cref{eq:lidblad for thermo} fully determines the statistics of all ticks as discussed in~\cite{thermoClockErker}.\Mspace

\subsection{Quasi-ideal Ticking Clock}\label{sec:example QIC} 

In~\cite{woods2018quantum} a ticking clock based upon the results from~\cite{WSO16} was introduced in the context of the model from~\cite{RaLiRe15}. Here we consider the same \clock{} but when its coupling to the register results from the axioms introduced in \cref{sec:new quantum clock def}, rather then the model~\cite{RaLiRe15}. It therefore satisfies \cref{def_quantumclock} of a ticking clock. From~\cite{woods2018quantum} we have that $N_L=d$, the dimension of the \clock{} and for all $j=1,2,\ldots, d$:
\begin{align}
L_j=0, \quad J_j=\sqrt{2 V_j} \ketbra{\psi_\cl}{t_j},\label{eq:lindbalds for QI clock} 
\end{align}
where $(\ket{t_j})_{j=1}^d$ is an orthonormal basis for $\mathcal{H}_\cl$. The state $\rho_\cl^0=\ketbra{\psi_\cl}{\psi_\cl}$ is both the initial state of the \clock{} and  the state which it is reset to after each tick. It is called the Quasi-ideal clock state and follows a complex Gaussian distribution in the $(\ket{t_j})_{j=1}^d$ basis. The coefficients $(V_j>0)_{j=1}^d$ follow a peaked distribution; see~\cite{woods2018quantum} for details. The Hamiltonian $H$ is a ladder Hamiltonian with equally spaced energy gaps and diagonal in the Fourier transform basis generated from $(\ket{t_j})_{j=1}^d$.\Mspace

\begin{figure}\includegraphics[scale=0.55]{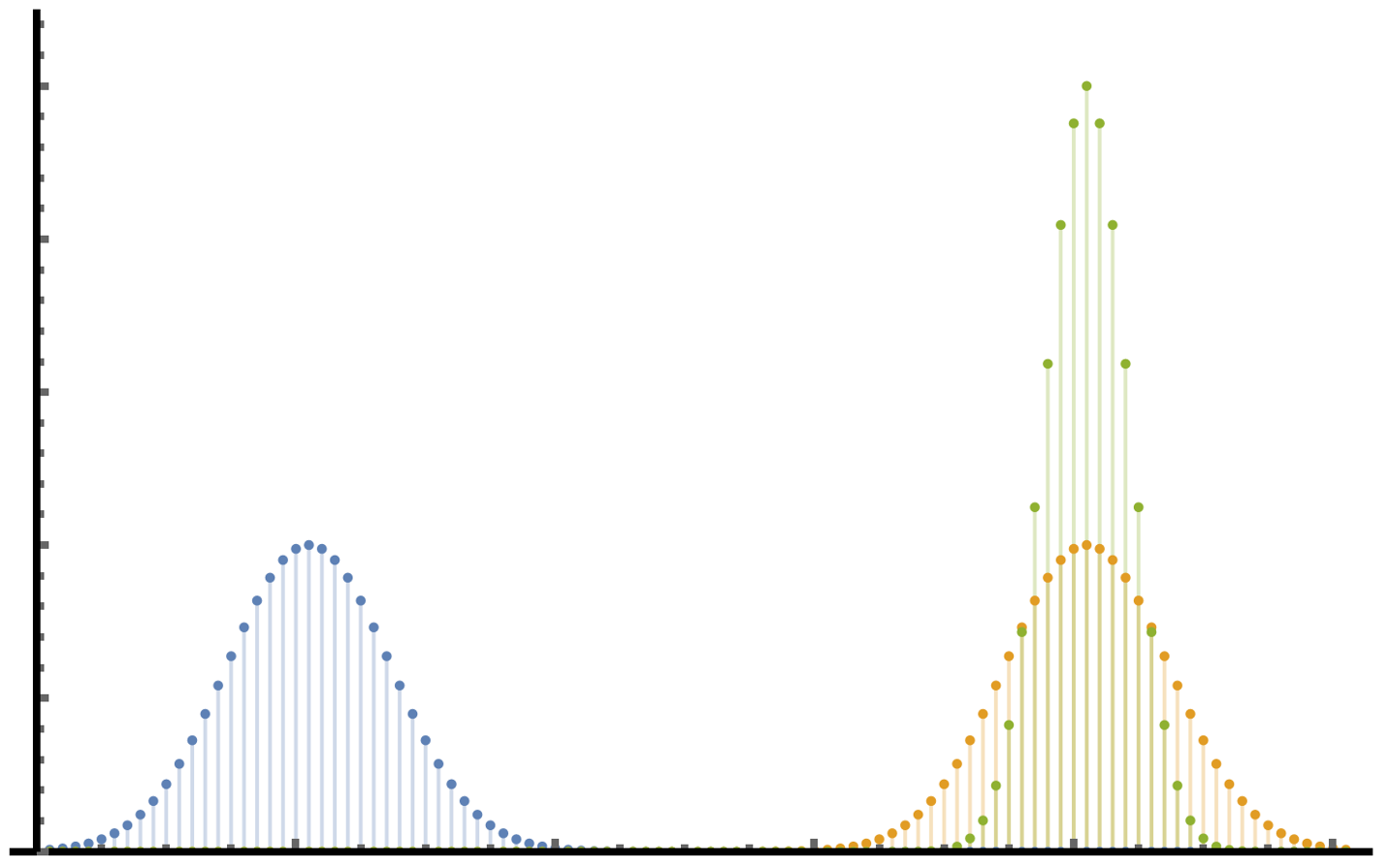}
\caption{{Qualitative plots for the Quasi-ideal ticking clock. Blue: magnitude of the amplitudes of the \clock{} in the basis $(\ket{t_j})_{j=1}^d$ at time $t=0$. Orange: magnitude of amplitudes of the \clock{} in the basis $(\ket{t_j})_{j=1}^d$ around the time when the 1st tick is likely to occur. Green: coefficients $(V_j)_{j=1}^d$. Observe the the blue and green curves have very small overlap while the orange and green have a large overlap.
	}}\label{fig:quasi ideal clock}
\end{figure}

The free dynamics of the \clock{} according to $H$ allows the complex Gaussian amplitude distribution to ``move coherently'' in the $(\ket{t_j})_{j=1}^d$ basis until the peak of the distribution overlaps with the peak of the distribution $(V_j>0)_{j=1}^d$, at which point a tick occurs and the \clock{} is reset and starts again (see \cref{fig:quasi ideal clock}). Note that the statistics of the 1st tick are invariant under the exchange of the Lindblad operators in \cref{eq:lindbalds for QI clock} with the simpler form
$L_j=J_j=0$ for $j=1,2,\ldots, d-1$ and $L_d=0$, $J_d=\sum_{j=1}^d \sqrt{2 V_j} \ketbra{j}{t_j}$. Here $\{\ket{j}\in\mathcal{H}_\cl\}_j$ is any orthonormal basis since in this case the \clock{} does not need to be reset after it ticks | as can be seen formally from \cref{eq:eqfor COM}. 

\subsection{Ladder Ticking Clock}\label{sec:example Ladder} 
This clock is a classical ticking clock (\cref{def:classical tickinc clock}) which was defined in~\cite{ATGRandomWalk} and proven to be the most accurate classical continuous coordinate time clock in~\cite{woods2018quantum} in the context of the model~\cite{RaLiRe15}. As with the example of \cref{sec:example QIC}, it can easily be adapted to the ticking clock model {in} \cref{def_quantumclock}. We find $N_L=d$, with
\begin{subequations}
\begin{alignat}{2}
L_j&= \ketbra{c_{j+1}}{c_j},\quad  &&J_j= 0,\\
L_d&= 0,   &&J_d= \ketbra{c_1}{c_d},
\end{alignat}
\end{subequations}
for some orthonormal basis $\{\ket{c_j}\}_{j=1}^d$. The {clock's} initial state is $\rho_\cl= \proj{c_1}$ and since it is a classical clock, the Hamiltonian term vanishes, $H=0$.\Mspace

{In an appropriate limit, this clock can also be approximated by the thermodynamic clock.  See~\cite{thermoClockErker} for details.}

\section{Measures of accuracy}\label{sec:accuracy}
We call an \emph{accuracy measure} of a ticking clock, any quantity which can be written solely as a function of measurement outcomes of the register on $\mathcal{H}_\reT$ at different coordinate times. It can only depend on the state of the \clock{} indirectly, via its coupling to the register, but not on the state of the \clock{} directly. This is important since with the accuracy measure one wants to capture how good the \clock{} is at emitting temporal information to the ``outside''. Good examples of accuracy measures are those which depend solely  on the tick delay functions. The $k^\text{th}$ tick \emph{delay function} is the probability density that $k-1$ ticks occur during the coordinate time interval $[0,t)$ while the $k^\text{th}$ tick occurs exactly at coordinate time $t$. It is worth noting that while in the theoretical description of the ticking clock model one has knowledge of the state of the register \emph{and} the corresponding value of coordinate time, in an actual physical implementation of any ticking clock, one can only infer coordinate time indirectly. As such, any measure of accuracy may be hard to calculate experimentally and may require many repeated experiments involving multiple copies of any given ticking clock in order to garner enough statistics. This constraint of the model, is a virtue not a weakness however, since by definition the only information we should have access to (i.e. which should be stored in a register), is the information about ticks | not coordinate time directly. The storage of coordinate time in the register was an inherent drawback of earlier ticking clock models.\Mspace

 We now consider the cut-off register model, since in this case, the $k^\text{th}$ tick delay function (for $k=1,2,3,\ldots, N_T$) takes on exactly the same expression in terms of how it is related to dynamics on the \clock{} as presented in~\cite{woods2018quantum} for the model~\cite{RaLiRe15} (see \cref{Sec:Implicit Ticking Clock Representation} for details). The clock accuracy of the $k^\text{th}$ tick $R_k$, introduced in~\cite{thermoClockErker,woods2018quantum}, is the square of the ratio between the mean and standard deviation of the $k^\text{th}$ tick delay function with respect to coordinate time $t$. As such, there is a one-to-one relation between the ticking clocks in~\cite{woods2018quantum} and the ones introduced in this \doc{} for $k=1,2,\ldots, N_T$. Subsequently, all the theorems about the accuracy of ticking clocks in~\cite{woods2018quantum} also apply to those introduced here with a cut-off register. For instance, the most accurate classical ticking clock (see \cref{sec:example Ladder}) satisfies $R_k=k d$ for $k=1,2,\ldots, N_T;\, d\in\nnp$; where $d$ is the Hilbert space dimension of the \clock{}. The thermodynamic clock in \cref{sec:example thermo} can also achieve a similar accuracy; see~\cite{thermoClockErker}. On the other hand, there exists a quantum ticking clock (see \cref{sec:example QIC}) whose accuracy is lower bounded by 
\begin{align}
R_k= k R_1, \quad R_1\geq  d^{2-\varepsilon} +\lo(d^{2-\varepsilon}),
\end{align}
for all $k=1,2,\ldots, N_T$ and for all fixed $\varepsilon>0$ in the large $d$ limit~\cite{woods2018quantum}. Recently it has been shown that this bound is essentially tight for ticking clocks which only tick once~\cite{Yuxiang2}. It remains an open question whether a quantum ticking clock which ticks more than once can have ticks which have a higher accuracy.\Mspace

The alternative ticks game measure of accuracy~\cite{RaLiRe15,ATGRandomWalk}, is applicable to the ticking clock models developed here, with the difference that the referee will need to play the game {on-the-fly}, rather than comparing register states at the end {of the protocol (as proposed in \cite{RaLiRe15}),} since the registers in the new model do not record the coordinate time corresponding to when the tick occurred. 
\Mspace

\section{Conclusions and Outlook}\label{sec:discussions}

\subsection{Conclusions}\label{Sec:conclusions}

We started by discussing the ticking clock model of~\cite{RaLiRe15} which was one of the first theoretical models of a ticking clock. This revolutionary work inspired follow-up papers yet also some legitimate concerns from the community\footnote{By ``community'' we refer to comments from anonymous referees and senior scientists during conference talks.} regarding the physicality of its foundations which we formalise and discuss.
We then introduce an axiomatic definition of a new ticking clock based on physical principles and derive explicit solutions to its equations of motion. The aforementioned drawbacks do not apply to the new ticking clock model introduced here. {In a nutshell, the main reasons are twofold: On the one hand the new formalism only requires finite running memory (see \cref{sec:Finite memory}), while one the other, the axioms imply the existence of a dynamical semi-group which describes the dynamics of the \emph{entire} register and clockwork for all coordinate time.} We furthermore show that the new equations of motion admit a fully autonomous realisation.\Mspace

With every ticking clock, one can associate a set of delay functions which determine the accuracy measures of the ticking clock. We show that there is a one-to-one correspondence between the set of delay functions produced by the new ticking clock model introduced here and that of~\cite{RaLiRe15}. Consequently, bounds on the accuracy of the clocks in~\cite{RaLiRe15}, derived in~\cite{woods2018quantum}, apply also to the new ticking clock model presented here. Therefore, the main conclusion of~\cite{woods2018quantum}, namely that quantum ticking clocks are more accurate than classical ones, applies also to the new ticking clock model presented here.\Mspace

Another ticking clock model, based on thermodynamic principles was introduced in~\cite{thermoClockErker}. This model has many positive points, such as being fully autonomous and physically well motivated. It does however have some drawbacks, such as not being derived from first principles and having a reported accuracy which is substantially lower than that of the quantum ticking clock in~\cite{woods2018quantum}. Consequently, the results of this \doc{} imply that both of the desirable properties of the ticking clock models~\cite{thermoClockErker,RaLiRe15} are achievable in one physically transparent model: full autonomy and high accuracy. This is considered the most import conclusion of this \doc. What's more we have seen that the ticking clock in~\cite{thermoClockErker} is in fact a special case of the ticking clock model introduced here.\Mspace

Here by \emph{autonomous} it is understood that for every ticking clock according to the new \cref{def_quantumclock}, there exists a large macroscopic environment, a time independent and local Hamiltonian over the \clock{}, register, and environment, which represents said ticking clock. Note that this environment need not necessarily be thermal. Other possibilities such as a vacuum state, may turn out to be necessary in some cases. Whether indeed a (or several) thermal baths at various temperature(s) are sufficient for the most accurate clocks is an open question.\Mspace

One would like to be able to continuously monitor the tick register of a ticking clock in real-time | analogously to how one can listen for the ticks of a wall clock in real-time. At first sight, this may seem impossible since the quantum Zeno effect dictates that continuous observation of a quantum system causes it to freeze its motion; which is clearly not a desirable property of a clock. We show that one can continuously observe the tick register without affecting its statistics and hence its accuracy.\Mspace 




\subsection{Outlook}

It is worth discussing some of the multiple future directions this work opens up. On the one hand there are questions such as what is the most accurate quantum clock, how much entropy is produced every time the ticking clock ticks, or how would one build such a ticking clock in practice. On the other hand, one can consider extensions to the formalism itself. One very practical such extension would be to include noise from the environment and the possibility of leveraging tools from quantum error correction to protect ticks against such adversarial noise. 
Such studies could also be applied to other types of extensions to the model. We provide 4 such examples:\Mspace

1) Clocks in a network: One can readily extend the model introduced in this \doc{} to take into account a source of external timing. One way to consider external timing in the case of a ticking clock was introduced in~\cite{Yuxiang} and formulated in terms of the continuous time limit of the ticking clock of~\cite{RaLiRe15}. This extension could also be formulated for the ticking clocks introduced in this \doc. We now discuss two other types of extensions which to date have not been considered in the literature thus far.\Mspace

2) Relativistic ticking clock model: The ticking clock models in the literature (including this one), are not relativistic. Making them so would be an interesting endeavour. {Finding a convincing axiomatic definition appears  feasible, but} since we do not have a fully credible theory of {quantum} gravity yet, {solutions might turn out to} be highly speculative. One approach is to attempt to make relativistic versions of the axioms for ticking clocks presented in \cref{sec:new quantum clock def}, by stating how the observable statistics and invariant quantities in these axioms transform relativistically. 
Another method would be to employ the same approach used in~\cite{Shishir} to construct a relativistic quantum stopwatch. In such a semi-classical approach, one would include in the ticking clock set-up an additional kinematic degree of freedom associated with the ticking clock's momentum and position. One would then expand to leading order in relativistic corrections the general relativistic equations for time dilation, following a similar procedure to as in~\cite{pikovski2015universal}.\Mspace

3) Relaxation of the axioms: One could consider variants of the ticking clock models presented here by changing or disregarding some of the conditions 1) to 5) in \cref{sec:new quantum clock def}. An obvious choice would be to consider ``Time variant asymmetric ticking clocks'', namely those which do not satisfy condition 1) [\cref{eq:clock invariance}]. An example of such a ticking clock channel with a classical register would be \cref{eq:clock dynamics landblad,eq:clock dynamics landblad eq}  in \cref{prop:autonmous Lindbald form} under the replacement $\tilde J_j^{(l)} \mapsto$ 
$\sum_{k=0}^{N_T} J_{j,k} \otimes \ketbra{k+1 \textup{ mod. } N_T+1}{k}_\reT \left( 1- \hat \delta_{k,N_T+l} \right),
$ 
where $J_{j,k}$ are arbitrary operators on the \clock. Note however that such generalisations clearly cannot improve the accuracy of the clock.\Mspace

4) Unitary ticking clock model: The model proposed in this \doc{} takes on the form of a one-parameter dynamical semigroup over the \clock{} and tick register. The clockwork provides the necessary timing while the register stores the tick information. The potential accuracy of the ticking clock depends on the properties of the \clock, such as its dimensionality or energy. We have discussed how this can be dilated via the aid of an infinite-dimensional environment to unitary dynamics with a time independent Hamiltonian, using standard limiting procedures.

One could consider a ticking clock whose dynamics are unitarily driven by a time independent Hamiltonian with a \emph{finite} environment instead. 
It would however have no classical counterpart (\cref{def:classical tickinc clock}) nor allow for a classical register (\cref{def:classical register}). The lack of a classical register would mean that it could suffer from the Zeno effect, and thus there would be a critical observation frequency which if surpassed, the observations would start to change the accuracy of the clock significantly and if frequent enough, might even stop the clock ticking altogether. Understanding and quantifying the properties of such models could be an interesting future line of research.\Mspace

\tocless\acknowledgments
We thank Tony Short for encouragement to investigate alternative clocks models to that of~\cite{RaLiRe15} and Marcus Huber, Robert Salzmann, Nilanjana Datta, Mark Mitchison, and Renato Renner for useful discussions. We thank {\'A}lvaro Alhambra for careful proofreading and insightful comments. This work was supported by the Swiss National Science Foundation (SNSF) via an AMBIZIONE Fellowship (PZ00P2\_179914) in addition to the National Centre of Competence in Research ``QSIT''.




\bibliographystyle{apsrev4-1}
\bibliography{ANoteRefs}

\onecolumngrid

\newpage

\begin{appendices}

\section{Implicit Ticking Clock Representation}\label{Sec:Implicit Ticking Clock Representation}

In this \app{} we will formulate a representation of the ticking clock channel $\cM^{t}_{\cl\reT \to \cl \reT}$ which holds if and only if the ticking clock satisfies the axiomatic \cref{def_quantumclock}, up to some stated equivalence. Unlike the  representation of the ticking clock channel from \cref{sec:Explicit Ticking Clock Representation}, this  representation is technical in nature. Its presentation  is followed by some technical implications such as how its delay functions are related to those of~\cite{woods2018quantum}.\Mspace

We start with the following lemma which asserts that the ticking clock from \cref{sec:new quantum clock def} can equivalently be specified in terms of generators acting on the \clock{} space $\mathcal{H}_\cl$.

\begin{restatable}[Implicit ticking clock representation]{lemma}{clockgenerators}\label{lem_clockgenerators}
	The pair $(\rho_{\cl\reT}^0, (\cM^{t}_{\cl \reT \to \cl \reT})_{t\geq 0})$ form a ticking clock (\cref{def_quantumclock}) with a classical tick register (\cref{def:classical register}), up to \clock{}  equivalence (\cref{def:equivalence}), if and only if there exists a Hermitian operator $H$ as well as two finite sequences of operators $(L_j)_{j=1}^\m$ and $(J_j)_{j=1}^\m$ on $\bounded(\mathcal{H}_\cl)$; which are all $k$ and $t$ independent, such that for all $t\geq 0$ and $k=0,1,\ldots,N_T$;
\begin{subequations}
	\begin{align}\label{eq:lemma 1 stement channel at time}
	\begin{split}
	\cM^{t,k}_{\cl \to \cl \reT}(\rho_\cl^0)
	&=
	\! \lim_{\substack{N\to+\infty \\ N\in\nn}} \! \left( \cM^{t/N}_{\cl\reT \to \cl \reT} \right)^{\!\comP(N-1)}\!\! \circ \cM^{ t/N, k}_{\cl \to \cl \reT}(\rho_\cl^0),
	\end{split}
	\end{align}	
	where
	\begin{align}
	\cM^{t/N,\,k}_{\cl \to \cl \reT}(\cdot)
	= \,&(\cdot) \otimes \proj{k}_\reT + \left(\!\frac{t}{N}\!\right)\mathcal{C}_{(1,k)}(\cdot) \otimes\proj{k}_\reT
+ \left(\!\frac{t}{N}\!\right)\mathcal{C}_{(2,k)}(\cdot) \otimes \proj{k\!+\!1}_\reT+F^{t/N,\,k}_{\cl \to \cl \reT}(\cdot),\label{eq:M 3 term}
	\end{align}
	with
	\begin{align}
	\mathcal{C}_{(1,k)}(\cdot)&:=  - \mi [H, \cdot]  - \sum_{j=1}^\m \frac{1}{2} \{L^{\dagger}_j L_j + \theta(k)J^{\dagger}_j J_j, \cdot\} + L_j (\cdot) L_j^{\dagger},\label{eq:C 1 l}\\
	\mathcal{C}_{(2,k)}(\cdot)&:= \theta(k)\sum_{j=1}^\m J_j (\cdot) J_j^{\dagger},\label{eq:C 2 l}
	\end{align}
\end{subequations}
	and $F^{\delta,k}_{\cl \to \cl \reT}(\rho_\cl^0) =  \lo\,(\delta)$ entry-wise.  $\theta(k)=1$ for all $k$ in the periodic register case and $\theta(k)=1-\hat\delta_{k,N_T}$ in the cut-off register case, where $\hat\delta_{\cdot,\cdot}$ is the Kronecker delta.
\end{restatable}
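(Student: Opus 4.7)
The plan is to prove both directions of the biconditional by leveraging standard results about quantum dynamical semigroups and then peeling off the constraints imposed by each axiom in turn. First, conditions (2) and (3) make $(\cM^{t}_{\cl\reT\to\cl\reT})_{t\geq 0}$ a uniformly continuous CPTP semigroup on $\bounded(\mathcal{H}_\cl\otimes\mathcal{H}_\reT)$, so the generation theorem for uniformly continuous semigroups (Pazy 1983, cited in the footnote to condition (3)) supplies a bounded generator $\mathcal{L}_{\cl\reT}$ with $\cM^{t}_{\cl\reT\to\cl\reT} = \me^{t\mathcal{L}_{\cl\reT}}$, and the GKSL theorem forces $\mathcal{L}_{\cl\reT}$ into the Lindblad form
\begin{align*}
\mathcal{L}_{\cl\reT}(\rho) = -\mi[H_{\cl\reT},\rho] + \sum_{\alpha}\Bigl( K_\alpha \rho K_\alpha^{\dagger} - \tfrac{1}{2}\{K_\alpha^{\dagger} K_\alpha, \rho\}\Bigr)
\end{align*}
for some Hermitian $H_{\cl\reT}$ and bounded operators $(K_\alpha)_\alpha$.

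Second, I would use condition (1) to show that $\mathcal{L}_{\cl\reT}$ commutes with conjugation by the register-shift $\hat S_\reT:\ket{k}_\reT\mapsto\ket{k\!+\!1}_\reT$ (exactly in the periodic case, and up to the boundary in the cut-off case, which I defer to condition (5)). Because the shift group is abelian and acts linearly on the finite-dimensional span of the $K_\alpha$, the GKSL gauge freedom permits choosing each $K_\alpha$ as a shift eigenvector; simultaneously decomposing $H_{\cl\reT}$ into its shift-sector pieces yields the desired block structure. The classical register property (\cref{def:classical register}) then eliminates the off-diagonal register sectors: any shift-$n$ component of $H_{\cl\reT}$ with $n\neq 0$ would generate register coherence at first order in $t$ on a diagonal input, so $H_{\cl\reT}=H\otimes\id_\reT$; likewise any Fourier-mode (rather than pure position-shift) component of a $K_\alpha$ would also generate coherence, so each $K_\alpha$ must take the form $M_\alpha\otimes(\hat S_\reT)^{n_\alpha}$ with $M_\alpha$ acting on $\cl$ and $n_\alpha\in\zz$, and time-invariance makes the clockwork pieces $k$-independent.

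Third, I would apply the leading-order condition (4) to prune the shift values. For $\rho_\cl\otimes\proj{k}_\reT$ as input, the first-order-in-$t$ population on $\proj{k+n}_\reT$ is controlled by the shift-$n$ Lindblad terms; condition (4) forces this population to vanish faster than the $n=1$ one for every $n\notin\{0,1\}$, hence $n_\alpha\in\{0,1\}$ for all $\alpha$. Relabelling the shift-$0$ clockwork operators as $(L_j)_{j=1}^{\m}$ and the shift-$1$ ones as $(J_j)_{j=1}^{\m}$ (padding with zeros so both sequences have the same length) recovers exactly $\mathcal{C}_{(1,k)}$ and $\mathcal{C}_{(2,k)}$ in \cref{eq:C 1 l,eq:C 2 l}. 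A first-order Taylor expansion of $\me^{(t/N)\mathcal{L}_{\cl\reT}}$ applied to $\rho_\cl\otimes\proj{k}_\reT$ then gives \cref{eq:M 3 term} with remainder $F^{t/N,k}_{\cl\to\cl\reT}$ being $\lo(t/N)$ entry-wise, and the semigroup identity $\cM^{t}_{\cl\reT\to\cl\reT}=(\cM^{t/N}_{\cl\reT\to\cl\reT})^{\comP N}$ supplied by condition (2) delivers \cref{eq:lemma 1 stement channel at time}. In the cut-off case, condition (5b) requires the register to stay at $\proj{N_T}_\reT$ to leading order on input $\proj{N_T}_\reT$, which is exactly enforced by switching off the shift-$1$ terms at $k=N_T$ via $\theta(k)=1-\hat\delta_{k,N_T}$; the residual freedom in the clockwork dynamics at $k=N_T$ is precisely what \cref{def:equivalence} identifies, and condition (5a) ties the remaining data to an underlying periodic ticking clock.

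For the reverse direction, given $H$ together with $(L_j)_{j=1}^\m$ and $(J_j)_{j=1}^\m$, I would define $\mathcal{L}_{\cl\reT}$ through the shift-structured Lindblad form above, exponentiate, and verify conditions (1)--(4) (or (2) and (5)) and the classical-register property by direct computation; \cref{eq:M 3 term,eq:lemma 1 stement channel at time} then follow from the same Taylor expansion and semigroup identity as in the forward step. The main technical obstacle I anticipate is the shift-decomposition of the Lindblad operators in the second step: because GKSL data is only unique up to unitary mixing of the $K_\alpha$ and absorption of trace-parts into $H_{\cl\reT}$, one must exhibit a gauge in which every $K_\alpha$ is a pure register-shift operator. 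This reduces to a simultaneous diagonalisation of a commuting family of superoperators generated by the shift action on the finite-dimensional span of the $K_\alpha$, which is standard once set up carefully but must be combined with the classical-register constraint to suppress the Fourier-mode (non-local) sectors. Everything else amounts to bookkeeping of register-shift content in the Lindblad expansion.
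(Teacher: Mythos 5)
Your route is genuinely different from the paper's, and it has a gap at the second step. The paper never leaves the restricted CP maps $\cM^{t,k}_{\cl\to\cl\reT}(\cdot)=\cM^t_{\cl\reT\to\cl\reT}((\cdot)\otimes\proj{k}_\reT)$: it writes each in Kraus form with operators $Q_j^{(k)}(t)$, projects them onto the register basis via $N_j^{(k)}(l,t)=\bra{l}_\reT Q_j^{(k)}(t)$, uses the classical-register property to kill the off-diagonal register blocks of the output, uses condition 4) together with a per-entry bound to control the non-adjacent register components, and invokes condition 1) to obtain $k$-independence of the surviving blocks. Only then does it use conditions 2) and 3) to justify a power-law ansatz for $N_j^{(0)}(l,\delta t)$, and the unitary-equivalence freedom of Kraus decompositions to pin down the zeroth-order coefficients. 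You instead pass to the \emph{full} generator $\mathcal{L}_{\cl\reT}$ via GKSL (step 1, which is fine) and then try to put the Lindblad operators $K_\alpha$ into pure shift form by a gauge argument based on shift-covariance of $\mathcal{L}_{\cl\reT}$.

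The problem is precisely that claim of shift-covariance. Condition (1), the classical-register property (\cref{def:classical register}), and the leading-order condition (4) are all statements about $\cM^{t,k}_{\cl\to\cl\reT}$, i.e.\ about the action of the channel on product inputs $\rho_\cl\otimes\proj{k}_\reT$ and the resulting register-diagonal outputs. None of the axioms constrain what $\mathcal{L}_{\cl\reT}$ does to register-coherent inputs $\rho_\cl\otimes\ketbra{k}{k'}_\reT$ with $k\neq k'$. Concretely, replacing $\mathcal{L}_{\cl\reT}$ by $\mathcal{L}_{\cl\reT}-\mi[\id_\cl\otimes\sum_k c_k\proj{k}_\reT,\cdot]$ with arbitrary reals $c_k$ leaves every $\cM^{t,k}$ unchanged (since the register stays diagonal under the dynamics) yet destroys commutation of the generator with conjugation by $\hat S_\reT$ unless the $c_k$ happen to be shift-invariant. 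Hence your gauge argument giving $K_\alpha = M_\alpha\otimes(\hat S_\reT)^{n_\alpha}$ is not available, and the pruning in your third step then has nothing to act on. To repair this you would either have to work inside the register-diagonal sector from the start and show that the \emph{restricted} superoperator admits a local-shift Lindblad representation there (which is what the paper's Kraus-level bookkeeping accomplishes, less conceptually but rigorously), or you would have to show separately that every ticking clock admits a shift-covariant representative agreeing on all $\rho_\cl\otimes\proj{k}_\reT$ | but that stronger structural claim is what \cref{prop:autonmous Lindbald form} asserts, and in the paper it is derived \emph{from} \cref{lem_clockgenerators}, not vice versa.
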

The proof of this lemma, which uses some elements of the proof of Lindblad's representation theorem~\cite{Lindblad, Vittorio}, is provided in \App{} \cref{sec_GeneratorLemmaProof}.\Mspace

As regards to the dynamics on $\mathcal{H}_\cl\otimes\mathcal{H}_\reT$, \cref{eq:lemma 1 stement channel at time} is completely determined by \cref{eq:M 3 term} in terms of the initial state and operators $H$,  $(L_j)_j$, $(J_j)_j$. To see this, 1st note that applying the channel $\cM^{t/N}_{\cl\reT \to \cl \reT}$ to both sides of \cref{eq:M 3 term} one obtains
the composition law
	\begin{align}\label{eq:recursion relation in lemma}
	\cM^{t/N}_{\cl\reT \to \cl \reT} \circ  \cM^{t/N, l}_{\cl \to \cl \reT}\left(\cdot\right)&=
	\cM^{t/N, l}_{\cl \to \cl \reT}\left(\cdot\right) +\left(t/N\right)\,  \cM^{t/N, l}_{\cl \to \cl \reT}\left(\mathcal{C}_{(1,l)}(\cdot)\right)+ \left(t/N\right)\,  \cM^{t/N, l+1}_{\cl \to \cl \reT}\left(\mathcal{C}_{(2,l)}(\cdot)\right)+  F^{\delta,k}_{\cl \to \cl \reT}(\cdot),
	\end{align}
with $l=0,1,\ldots, N_T$. This establishes \cref{eq:lemma 1 stement channel at time} inductively. Every one of the $N$ applications of the channel $\cM^{t/N}_{\cl\reT \to \cl \reT}$ in \cref{eq:lemma 1 stement channel at time}, has a direct physical meaning. Up to order $\lo(t/N)$ contributions, the $i^\text{th}$ application of  $\cM^{t/N}_{\cl\reT \to \cl \reT}$ takes the state from the previous time step \big($i-1$ applications of $\cM^{t/N}_{\cl\reT \to \cl \reT}$\big), and updates the state and probability so that if $l$ ticks had occurred in the previous time step, then either no tick occurs or the ticking clock ticks once, in the $i^\text{th}$ time step. Other processes, such as the clock ``loosing a tick'' or ticking more than once in the time step $t/N$ can only occur with probability $\lo(t/N)$. However, when taking the $N\to+\infty$ limit in \cref{eq:lemma 1 stement channel at time} the order $\lo(t/N)$ terms vanish. As such they are irrelevant and can be set to zero if one wishes. Furthermore, the requirement that $F^{\delta,k}_{\cl \to \cl \reT}(\rho_\cl) =  \lo\,(\delta)$ entry-wise, holds if and only if $\| F^{\delta,k}_{\cl \to \cl \reT}(\rho_\cl)\|_p =  \lo\,(\delta)$ for any $p> 0$ where $\| \cdot\|_p$ is the operator norm induced by the vector $p$-norm. This is shown in \cref{lem: p norm equiv} in the \app.\Mspace

Observe that we have not placed any restrictions on $\m\in\nn$. It turns out that without loss of generality, one can set $\m=d^2-1$, where $d$ is the Hilbert space dimension of the \clock. This is because for any two finite sequences $(L_j)_{j=1}^\m,(J_j)_{j=1}^\m$ giving rise to \cref{eq:lemma 1 stement channel at time}, there exists a new set $(L_j')_{j=1}^{d^2-1},(J_j')_{j=1}^{d^2-1}$ which gives rise to the same dynamics in \cref{eq:lemma 1 stement channel at time} upon their substitution. This follows from simple variants of well known proofs in quantum channel representation theory, as shown in \cref{lem: up bound on m} in the \app. Since the representations of the channel $\cM^{t}_{\cl\reT \to \cl \reT}$ in \cref{lem_clockgenerators,prop:autonmous Lindbald form} are {equivalent}, the choice $N_L=d^2-1$ can also be made in \cref{prop:autonmous Lindbald form} w.l.o.g.\Mspace

If in the definition of a ticking clock (\cref{def_quantumclock}), we remove the condition that the channel from $\bounded(\mathcal{H}_\cl\otimes\mathcal{H}_\reT)$ to $\bounded(\mathcal{H}_\cl\otimes\mathcal{H}_\reT)$ is Markovian [\cref{eq_extension CR to CR}], then \cref{lem_clockgenerators} still holds if one traces out the register in both sides of \cref{eq:lemma 1 stement channel at time} to produce the channel $\cM_{\cl\to\cl}^t$. 
In such cases, it appears that the full channel $\cM_{\cl\reT\to\cl\reT}^t$ producing the dynamics on the register is undetermined other than for an infinitesimal time step. While such channels do not allow one to determine all properties of the clock, it does allow one to determine the probability of the ticking clock ``ticking" during infinitesimal time step $[t,t+dt]$. Denoting this probability $P_\textup{tick}\left(\rho_\cl(t)\right) dt$, one has that $P_\textup{tick}\left(\rho_\cl(t)\right)= \sum_{k=0}^{N_T} p_k(t) P_\textup{tick}^{(k)}\left(\rho_\cl(t)\right)$ where $P_\textup{tick}^{(k)}\left(\rho_\cl(t)\right)$ is the probability density corresponding to ticking during coordinate time interval $[t,t+dt]$, given that the probability of the \clock{} and register being in state $\rho_\cl(t)\otimes\proj{k}_\reT$ at time $t$, was $p_k(t)$. For all probability distributions $\big(p_k(t)\big)_k$ it takes the value     \Mspace
\begin{align}\label{eq:ticking prob}
 P_\textup{tick}\!\left(\rho_\cl(t)\right) &= \sum_{k=0}^{N_T} p_k(t)  \lim_{\delta\rightarrow 0^+} \frac{ \tr \Big[\! \proj{k\!+\!1}_\reT\! \left(\cM_{\cl \to \cl\reT}^{\delta,k} (\rho_\cl(t))-\rho_\cl(t)\otimes\proj{k}_\reT\right) \!\Big]\! }{\delta}\nonumber\\
&= \left(\sum_{k=0}^{N_T} p_k(t)\,\theta(k)\right)\tr\Bigg[ \Bigg(\sum_{j=1}^\m  J_j^\dag J_j \Bigg)\rho_\cl(t)  \Bigg].
\end{align}
where $\rho_\cl(t)=\cM_{\cl\to\cl}^t(\rho_\cl)$ is the \clock{} state at coordinate time $t$. Since $\sum_{k=0}^{N_T} p_k(t)=1$, in the case of the periodic register, the factor $\sum_{k=0}^{N_T} p_k(t)\,\theta(k)$ vanishes from \cref{eq:ticking prob} and $P_\textup{tick}\!\left(\rho_\cl(t)\right)$ only depends on the local \clock{} dynamics. 
However, such probabilities are not so useful for determining measures of ticking clock accuracy, since they do not provide information about individual ticks.\Mspace

For example, a more useful quantity is the probability of producing the $k^\textup{th}$ tick during coordinate time interval $[t,t+dt]$. Or in other words, the probability that during time interval $[0,t]$, the ticking clock ticked $k-1$ times and then produced a tick during time interval $[t,t+dt]$. We denote this probability measure $P_\textup{ticks}^{(k)}(t) \,dt$, where $P_\textup{ticks}^{(k)}(t)$ is called the \emph{$k^\text{th}$ tick delay function}. One finds
\begin{align}\label{eq:ticking from state k prob}
P_{\textup{ticks}}^{(k)}(t)
=\! \lim_{\delta\rightarrow 0^+}
&\frac{ \tr \Big[\! \proj{k}_\reT\! \left(\cM_{\cl\reT \to \cl\reT}^{\delta} (\rho^{(k-1)}_{\cl\reT}(t))-\rho_{\cl\reT}^{(k-1)}(t)\right)\! \Big]\! }{\delta},
\end{align}
where $\rho^{(k-1)}_{\cl\reT}(t)$ is the un-normalised outcome of a measurement when the register is found to be in the $\proj{k\!-\!1}_\reT$ state,
\begin{align}
\begin{split}
\rho^{(k-1)}_{\cl\reT}(t)=
&  \proj{k\!-\!1}_\reT\left(\cM_{\cl \to \cl\reT}^{t,0} (\rho_{\cl}^0)\right)\proj{k\!-\!1}_\reT,
\end{split}
\end{align} 
and we have assumed that the \clock{} and register are initialised to $\rho_\cl^0\otimes \proj{0}_\reT$ at coordinate time $t=0$. 
In the case of a cut-off register, from \cref{eq:M 3 term}, we observe that in every infinitesimal time step, the dynamics incurred on the \clock{} for the first $N_T-1$ ticks is identical to that derived in \cite{woods2018quantum} for the model \cite{RaLiRe15} in the continuous time limit. 
Therefore, the $k^\text{th}$ tick delay function, $P_{\textup{ticks}}^{(k)}(t)$, is the same for the ticking clocks in \cref{prop:autonmous Lindbald form} and those in \cite{woods2018quantum} for $k=1,2,3, \ldots, N_T$ in the cut-off register case. This has important consequences as discussed in \cref{sec:accuracy}.\Mspace

For example, consider the case of the 1st tick in the case of the cut-off register model. A simple calculation finds that $\rho^{(0)}_{\cl}(t)$ is generated via the \clock's dynamics with the tick generating channel removed:
\begin{align}
\rho^{(0)}_{\cl}(t) &=\tr_\reT\left[ \proj{0}_\reT\left(\cM_{\cl \to \cl\reT}^{t,0} (\rho_{\cl})\right) \right]\,\\
&= \!\lim_{N\to +\infty}\!\tr_\reT\!\left[ \left(\proj{0}_\reT\cM_{\cl \to \cl\reT}^{t/N,0} (\rho_{\cl})\proj{0}_\reT\right)^{\!\comP N} \right]\label{eq:eqfor COM}\\
&= \me^{t \mathcal{L}_{\cl}^{(0)}} (\rho_\cl),\\
\mathcal{L}_{\cl}^{(0)} \left( \cdot \right) &= \mathcal{C}_{(1,0)}(\cdot).
\end{align}
In the case of the periodic register model, the above expression for $\rho^{(0)}_{\cl}(t)$ does not hold, since the equality in line \eqref{eq:eqfor COM} is false. Physically speaking, this is because in the periodic register case, when the register runs out of memory, a tick is produced in the initial memory state $\proj{0}_\reT$ and thus the dynamics of the clock at times after the 1st tick has occurred are still relevant for the 1st tick's statistics. This is not the case in the infinite register limit $N_T\to+\infty$ nor for the cut-off register case\fnr{ or when the register never runs out of memory (i.e.\ $N_T=\infty$)}.\Mspace	

\section{Proofs}

\subsection{Proof of \cref{prop:continous measurements}}\label{sec:measurement equiavalence}

\continuousMeasurements*

\begin{proof}
	To start with, observe that for a ticking clock $(\rho_\cl^0, \cM^{t}_{\cl\reT \to \cl \reT})$ with a classical register, one has for all $t\geq 0$; $k=0,1,\ldots, N_T$; $\rho_\cl^0\in\mathcal{S}(\mathcal{H}_\cl)$, 
	\begin{align}
	\cM^{t,k}_{\cl \to \cl \reT}(\rho_\cl^0)=& \sum_{l=0}^{N_T} \tr_\reT\left[ \proj{l}_\reT \cM^{t,k}_{\cl \to \cl \reT}(\rho_\cl^0) \right] \otimes \proj{l}_\reT\\
	{=}& \sum_{l=0}^{N_T} \mathtt{Prob}[l,t]\, \mathcal{P}_l\circ \cM^{t,k}_{\cl\to \cl \reT}(\rho_\cl^0),
	\end{align}
	where $\mathtt{Prob}[l,t]:= \tr\left[ \proj{l}_\reT \cM^{t,k}_{\cl\reT \to \cl \reT}(\rho_\cl^0) \right]$. Thus, using the Markovian property of channel $\cM^{t}_{\cl\reT \to \cl \reT}$ [condition 2)], followed by iteratively substituting the above equation,
	\begin{align}
	&\cM^{t,k}_{\cl \to \cl \reT}(\rho_\cl^0)\\
	&= \circsum{1}{N} \cM^{t_n}_{\cl\reT \to \cl \reT}\left(\rho_\cl^0 \otimes \proj{k}_\reT\right)\\
	&= \circsum{2}{N} \cM^{t_n}_{\cl\reT \to \cl \reT}\circ \sum_{l_1=0}^{N_T} \mathtt{Prob}[l_1,t_1]\, \mathcal{P}_{l_1}\circ \cM^{t_1}_{\cl\reT \to \cl \reT} \left(\rho_\cl^0 \otimes \proj{k}_\reT\right)\\
	&= \circsum{3}{N} \cM^{t_n}_{\cl\reT \to \cl \reT}\circ \sum_{l_2=0}^{N_T} \mathtt{Prob}[l_2,t_2]\, \mathcal{P}_{l_2}\circ \cM^{t_2}_{\cl\reT \to \cl \reT} \circ\sum_{l_1=0}^{N_T} \mathtt{Prob}[l_1,t_1]\, \mathcal{P}_{l_1}\circ \cM^{t_1}_{\cl\reT \to \cl \reT} \left(\rho_\cl^0 \otimes \proj{k}_\reT\right)\\
	&= \sum_{l_1=0}^{N_T}\ldots \sum_{l_N=0}^{N_T} \mathtt{Prob}[l_1,t_1]\cdots \mathtt{Prob}[l_N,t_N] \circsum{1}{N} \mathcal{P}_{l_n}\circ \cM^{t_n}_{\cl\reT \to \cl \reT} \left(\rho_\cl^0 \otimes \proj{k}_\reT\right).
	\end{align}
	Observe that $\mathtt{Prob}[l_1,t_1]\cdots \mathtt{Prob}[l_N,t_N]= \mathrm{Prob}[{\vec s}_N]$. Taking into account \cref{def:continous measured channel} we complete the proof.
\end{proof}

\subsection{Proof of \cref{lem_clockgenerators}} \label{sec_GeneratorLemmaProof}
\clockgenerators*
\begin{proof}
First we will prove the proposition for the case of a ticking clock with a periodic register. The case of a cut-off register will then be straightforward. To start with, we consider the most general representation of a channel in the Kraus form, namely
\begin{align}
\cM_{\cl  \to \cl  \reT}^{t,k} (\rho_{\cl }) = \sum_{j=0}^{\NQ } Q_j^{(k)}(t) \rho_\cl\, {Q_j^{(k)}}^\dag(t),  
\end{align}	
where 
\begin{align}\label{eq:normalisation Krouas ops}
Q_j^{(k)}(t): \bounded(\mathcal{H}_\cl)\to\bounded(\mathcal{H}_\cl\otimes\mathcal{H}_\reT)
\end{align}
and $\sum_j {Q_j^{(k)}(t)}^\dag Q_j^{(k)}(t)=\id_\cl$. We now expand the $Q_j^{(k)}(t)$ operators in the register basis. By making the identification $N_j^{(k)}(l,t):= \bra{l}_\reT Q_j^{(k)}(t): \bounded(\mathcal{H}_\cl)\to\bounded(\mathcal{H}_\cl)$ this yields
\begin{align}
\cM_{\cl  \to \cl  \reT}^{t,k} (\rho_{\cl }) = \sum_{l,l'=0}^{N_T}\sum_{j=0}^{\NQ } N_j^{(k)}(l,t) \rho_\cl\, {N_j^{(k)}}^\dag(l',t)\otimes\ketbra{l}{l'}_\reT,  
\end{align}	
since the register is classical, the off diagonal terms must vanish due to compatibility with \cref{eq:classical reg def}. We therefore have
\begin{align}\label{eq:classical reg def 2}
\cM_{\cl  \to \cl  \reT}^{t,k} (\rho_{\cl }) = \sum_{l=0}^{N_T}\sum_{j=0}^{\NQ } N_j^{(k)}(l,t) \rho_\cl\, {N_j^{(k)}}^\dag(l,t)\otimes\ketbra{l}{l}_\reT,  
\end{align}	
for all $ k=0,1,\ldots,N_T;\, t\geq 0$. Observe that
\begin{align}\label{}
\tr_\reT\left[\cM^{t,k}_{\cl \to \cl \reT}(\rho_\cl)\proj{l}_\reT\right] =\sum_{j=0}^{\NQ }N_j^{(k)}(l,t) \rho_\cl\, {N_j^{(k)}}^\dag(l,t).
\end{align}
Moreover, in the periodic register case, $\ket{l}_\reT=\ket{l \textup{ mod. } N_T+1 }_\reT$. For convenience, we therefore extend the definition of the operators $N_j^{(k)}(l,t)$ in the periodic case from $l=0,1,\ldots N_T$ to $l\in\zz$ by defining 
\begin{align}\label{eq:extension periodoc case}
N_j^{(k)}(l,t)= N_j^{(k)}(l \textup{ mod. } N_T+1,t)
\end{align}
 Consider an expansion of the form $N_j^{(k)}(l,t)=\sum_{n,m}a_{n,m}^{(j,k)}(l,t) \ketbra{n}{m}_\cl$ and states $\rho_\cl(p)$ which are pure and diagonal in this basis, $\rho_\cl(p)=\proj{p}_\cl$. Therefore
\begin{align}\label{eq:C to CT gen in proof 3}
  \frac{\tr_\reT\!\left[\cM^{t,k}_{\cl \to \cl \reT}(\rho_\cl(p))\proj{l}_\reT\right]}{\tilde p^{(k)}_{l}(t)} = \sum_{j=0}^{\NQ }\frac{ \sum_{\substack{n,m \\ n\neq m}}  a_{n,p}^{(j,k)}(l,t) {a_{m,p}^{(j,k)}}^*(l,t) \ketbra{n}{m}}{\tilde p^{(k)}_{l}(t)}+ \frac{\sum_{n}  a_{n,p}^{(j,k)}(l,t) {a_{n,p}^{(j,k)}}^*(l,t) \ketbra{n}{n}}{\tilde p^{(k)}_{l}(t)}.
  \end{align}
Now noting the definition of $\tilde p^{(k)}_l(t)$ (\cref{eq:tilde p k l delta}) and taking the trace on both sides, we find
  \begin{align}
  1=  \frac{\sum_{n,j}  |a_{n,p}^{(j,k)}(l,t)|^2 }{\tilde p^{(k)}_{l}(t)}.
  \end{align}
 Therefore,
  \begin{align}\label{eq:coefs constrinat}
  \left|a_{n,p}^{(j,k)}(l,t)\right|\leq \sqrt{\tilde p^{(k)}_{l}(t)}\quad  \forall\, j=0,1,\ldots,\NQ ; \, n, p =0,1,\ldots,d-1; l,k=0,1,\ldots,N_T;\,  t\geq 0,
  \end{align}
 where $d\in\nnp$ is the dimension of the Hilbert space of the \clock.
We will now use inequality \cref{eq:coefs constrinat} together with condition 4) [\cref{eq:leaidng order condition}] to show an important limit. To start with, denote the entries of a matrix $M$ by $[M]_{[ab]}$ and observe that \cref{eq:coefs constrinat} implies
\begin{align}\label{eq:little o scaling lidblad}
\begin{split}
\lim_{t\to 0^+} \left|\frac{\sum_{\underset{l\notin  \{k,  f(k)\}}{l=0\qquad}}^{N_T}\sum_{j=0}^{\NQ }\left[N_j^{(k)}(l,t) \rho_\cl {N_j^{(k)}}^{\dagger}(l,t)\right]_{[ab]}}{\tilde p_{f(k)}^{(k)}(t)}\right|&= \lim_{t\to 0^+} \left|\frac{\sum_{\underset{l\notin  \{k,  f(k)\}}{l=0\qquad}}^{N_T}\sum_{j=0}^{\NQ }\sum_{m,n}a^{(j,k)}_{a,m}(l,t)\left[\rho_\cl\right]_{[m,n]} {a^{(j,k)}_{b,n}}^*(l,t)}{\tilde p_{f(k)}^{(k)}(t)}\right|\\
& \leq (\NQ +1)\left(\sum_{m,n}\left|\left[\rho_\cl\right]_{[m,n]}\right|\right) \lim_{t\to 0^+} \frac{\sum_{\underset{l\notin  \{k,  f(k)\}}{l=0\qquad}}^{N_T} \tilde p_{l}^{(k)}(t)}{\tilde p_{f(k)}^{(k)}(t)}=0,
\end{split}
\end{align}
for all $k=0,1,\ldots,N_T$; $a,b=0,1,\ldots,d-1$;  $\rho_\cl\in\mathcal{S}\left(\mathcal{H}_\cl\right)$. In the last line in \cref{eq:little o scaling lidblad} we have used \cref{eq:leaidng order condition}. Therefore, 
\begin{align}
\sum_{\underset{l\notin  \{k,  f(k)\}}{l=0}}^{N_T}\sum_{j=0}^{\NQ }N_j^{(k)}(l,t) \rho_\cl {N_j^{(k)}}^{\dagger}(l,t)= \lo\left(\tilde p_{f(k)}^{(k)}(t)\right)
\end{align}
entry-wise in the $t\to0^+$ limit for all $k=0,1,\ldots,N_T$. 
Since every term $N_j^{(k)}(l,t) \rho_\cl {N_j^{(k)}}^{\dagger}(l,t)$ in the above summation is positive semi-definite, we thus have
\begin{align}
\cM_{\cl  \to \cl  \reT}^{\delta t,k} (\rho_{\cl }) =& \sum_{l\in \{k,f(k)\}} \sum_{j=0}^{\NQ }N_j^{(k)}(l,\delta t) \rho_\cl\, {N_j^{(k)}}^\dag(l,\delta t)\otimes\ketbra{l}{l}_\reT + \lo\left(\tilde p_{f(k)}^{(k)}(\delta t)\right),\\
=& \sum_{l\in \{0,f(k)-k\}} \sum_{j=0}^{\NQ }N_j^{(k)}(l+k,\delta t) \rho_\cl\, {N_j^{(k)}}^\dag(l+k,\delta t)\otimes\ketbra{l+k}{l+k}_\reT + \lo\left(\tilde p_{f(k)}^{(k)}(\delta t)\right),\label{eq: cl to cl reT}
\end{align}
for all $k=0,1,\ldots,N_T$; $t\geq 0$; $\rho_\cl\in\mathcal{S}\left(\mathcal{H}_\cl\right)$. 
On the other hand, from \cref{eq:classical reg def 2} it follows that the state of the \clock, given the register is measured to be in the state $\proj{k+l}_\reT$ for $l\in\zz$, is
\begin{align}\label{eq:C to CT gen in proof 5}
\tr_\reT\left[\cM^{t,k}_{\cl \to \cl \reT}(\rho_\cl)\proj{k+l}_\reT\right] =\sum_{j=0}^{\NQ }N_j^{(k)}(l+k,t) \rho_\cl\, {N_j^{(k)}}^\dag(l+k,t).
\end{align}
By virtue of condition 1) (\cref{eq:clock invariance}), we have that every matrix component of the r.h.s. of \cref{eq:C to CT gen in proof 5} is $k$ independent for all $t\geq 0$, $\rho_\cl \in\mathcal{S}(\mathcal{H_\cl })$ and $l\in\zz$; $k=0,1,\ldots,N_T$ s.t. $k+l=0,1,\ldots, N_T$ in the periodic register case. Therefore, in particular, we have 
\begin{align}\label{eq:set k to zero}
\sum_{j=0}^{\NQ } N_j^{(k)}(l+k,t) \rho_\cl\, {N_j^{(k)}}^\dag(l+k,t)= \sum_{j=0}^{\NQ } N_j^{(0)}(l,t) \rho_\cl\, {N_j^{(0)}}^\dag(l,t)
\end{align} for all $t\geq 0$, $\rho_\cl \in\mathcal{S}(\mathcal{H_\cl })$ and $l\in\zz$; $k=0,1,\ldots,N_T$ s.t. $k+l=0,1,\ldots, N_T$ in the periodic case. Plugging \cref{eq:set k to zero} into \cref{eq: cl to cl reT}, yields
\begin{align}\label{eq: cl to cl reT 2}
\cM_{\cl  \to \cl  \reT}^{\delta t,k} (\rho_{\cl }) = \sum_{l\in \{0,f(k)-k\}} \sum_{j=0}^{\NQ }N_j^{(0)}(l,\delta t) \rho_\cl\, {N_j^{(0)}}^\dag(l,\delta t)\otimes\ketbra{l+k}{l+k}_\reT + \lo\left(\tilde p_{f(k)}^{(k)}(\delta t)\right),
\end{align}
for all $k=0,1,\ldots,N_T-1$; $\rho_\cl\in\mathcal{S}\left(\mathcal{H}_\cl\right)$ in the periodic register case. For the case $k=N_T$ in the above equation, recall that due to \cref{eq:extension periodoc case} we have that 
\begin{align}
\sum_{j=0}^{\NQ } N_j^{(0)}(-N_T,t) \rho_\cl\, {N_j^{(0)}}^\dag(-N_T,t)= \sum_{j=0}^{\NQ } N_j^{(0)}(1,t) \rho_\cl\, {N_j^{(0)}}^\dag(1,t).\label{eq:set k to zero 2}
\end{align}

For the periodic register case, taking into account \cref{eq:set k to zero,eq:set k to zero 2} we have
\begin{align}\label{eq:p order invar periodic reg}
\tilde p_{f(k)}^{(k)}(t)= \tr\left[\sum_{j=0}^{\NQ } N_j^{(k)}(k+1,t)\rho_\cl {N_j^{(k)}}^\dag(k+1,t)\right]= \tr\left[\sum_{j=0}^{\NQ } N_j^{(0)}(1,t)\rho_\cl {N_j^{(0)}}^\dag(1,t)\right]= \tilde p_{1}^{(0)}(t),
\end{align}
for all $k=0,1,\ldots, N_T$; $t\geq 0$; $\rho_\cl\in\mathcal{S}\left(\mathcal{H}_\cl\right)$. Therefore, for all  $k=0,1,\ldots, N_T$; $t\geq 0$; $\rho_\cl\in\mathcal{S}\left(\mathcal{H}_\cl\right)$;  \cref{eq: cl to cl reT 2} reduces to
\begin{align}\label{eq: cl to cl reT 13}
\cM_{\cl  \to \cl  \reT}^{\delta t,k} (\rho_{\cl }) = \sum_{l\in \{0,1\}} \sum_{j=0}^{\NQ }N_j^{(0)}(l,\delta t) \rho_\cl\, {N_j^{(0)}}^\dag(l,\delta t)\otimes\ketbra{l+k}{l+k}_\reT + \lo\left(\tilde p_{1}^{(0)}(\delta t)\right).
\end{align}

It follows from \cref{eq:id channel clockwork a,eq:id channel clockwork b,eq_extension CR to CR} of condition 3), that the \clock{} channel admits a power-law expansion in $t$ (see uniformly continuous semigroup in~\cite{Pazy1983}). Specifically, there exits an operator $A_{\cl\reT}$ on $\bounded(\mathcal{H}_\cl\otimes\mathcal{H}_\reT)$ such that
\begin{align}
\cM_{\cl\reT  \to \cl\reT }^{t}=\me^{t A_{\cl\reT}}:=\sum_{n=0}^\infty t^n\frac{ A_{\cl\reT}^{\comP n}}{n!}.
\end{align}
Therefore, w.l.o.g. we can use the following ansatz: 

Let $n_1\leq \NQ $ of the elements of the sequence $\big(\,N_j^{(0)}(0,\delta t)\,\big)_j$ be of linear order in $\delta t$ while the others are of order  $\sqrt{\delta t}$. Specifically, let
\begin{align}\label{eq:N j 0 1st}
\Big (\, N_j^{(0)}(0,\delta t)=I_j+(-\mi H_j+K_j)\delta t\, \,\Big)_{j=0}^{n_1},\quad \Big (\, N_j^{(0)}(0,\delta t)=L_j \sqrt{\delta t}\, \,\Big)_{j=n_1+1}^{\NQ },
\end{align} 
where $H_j$, $K_j$ are Hermitian and the operators $L_j$, $H_j$, $K_j$, $I_j$, are all $t$ independent. Similarly, we can employ the same form of the expansion for the sequence $\big(\,N_j^{(0)}(1,\delta t)\,\big)_j$ associated with the register state $\ketbra{k+1}{k+1}_\reT:$
\begin{align}\label{eq:N j 1 1st}
\Big (\, N_j^{(0)}(1,\delta t)=I_j'+(-\mi \bar H_j+\bar K_j)\delta t\, \,\Big)_{j=0}^{n_1},\quad \Big(\, N_j^{(0)}(1,\delta t)=J_j \sqrt{\delta t}\, \,\Big)_{j=n_1+1}^{\NQ },
\end{align} 
where $\bar H_j$, $\bar K_j$ are Hermitian and the operators $J_j$, $\bar H_j$, $\bar K_j$, $I_j'$, are all $t$ independent operators.

We first fix the zeroth order terms by noting that $\cM_{\cl  \to \cl}^{0}:=\tr_\reT\left[\cM_{\cl  \to \cl\reT}^{0,k}\right]$ has to be the identity channel due to condition 3) [\cref{eq:id channel clockwork a}] and \cref{eq: cl to cl reT 13}. It hence follows:
\begin{align}\label{eq:M c to c id}
\cM_{\cl  \to \cl}^{0} (\rho_{\cl})=\sum_{j=0}^{n_1} I_j\rho_\cl I_j^\dag+ I_j'\rho_\cl{I_j'}^\dag=\rho_\cl
\end{align}
for all $\rho_\cl\in\mathcal{S}\left(\mathcal{H}_\cl\right)$. Two sets of Kraus operators $(\tilde K_l)_{l=0}^{n_1}$, $(\tilde K_l')_{l=0}^{n_1}$ give rise to the same quantum channel (i.e.\ $\sum_l \tilde K_l'(\cdot)\tilde {K_l'}^\dag=\sum_l \tilde K_l(\cdot)\tilde {K_l}^\dag$\,) iff there exists an $n_1$ by $n_1$ unitary $V$ with entries $V_{[lm]}\in\cc$ such that $\tilde K'_l=\sum_m V_{[lm]} \tilde K_m$ for all $l=0,1,\ldots,n_1$. See e.g.~\cite{Nilson,kraus} for a proof. Note that this even covers the case in which one or both of the channels have less than $n_1+1$ Kraus operator elements, since we can always choose to define additional Kraus operators which are equal to zero. Therefore, since $(I_j=c_j \id, I_j'=c_j'\id)_{j=0}^{n_1}$ with $\sum_{j=0}^{n_1} |c_j|^2+|c_j'|^2=1$ are solutions to \cref{eq:M c to c id}, this unitary equivalence theorem implies that it is the only family of solutions. On the other hand, one also finds 
\begin{align}
\lim_{t\to 0^+} \tilde p_0^{(0)}(t)&= \sum_{j=0}^{n_1}\tr\bigg[I_j\rho_\cl I_j^\dag \bigg]=1,\label{eq:tilde p k zeroth}\\
\lim_{t\to 0^+} \tilde p_{1}^{(0)}(t)&= \sum_{j=0}^{n_1}\tr\bigg[I_j'\rho_\cl {I_j'}^\dag \bigg]=0,\label{eq:tilde p k+1 zeroth}
\end{align}
for all $\rho_\cl\in\mathcal{S}\left(\mathcal{H}_\cl\right)$. The last equality in \cref{eq:tilde p k zeroth} follows from invoking condition 3) [\cref{eq:id channel clockwork b}] and the definition of $\tilde p_0^{(0)}$, while the last equality in \cref{eq:tilde p k+1 zeroth} follows from conservation of probability. Since $I_j'\rho_\cl {I_j'}^\dag$ is positive semi-definite, it follows that $I_j'=0$ for all $j=0,1,\ldots,n_1$ and thus we find $I_j=c_j\id$ with $\sum_{j=0}^{n_1} |c_j|^2=1$  for all $j=0,1,\ldots,n_1$. It thus follows from plugging in the above ansatz for $N_j^{(0)}(0,t)$ and $N_j^{(0)}(1,t)$ into the definition of $\tilde p_{1}^{(0)}$ that 
\begin{align}\label{eq:prob of order delta t}
\lo\big(\tilde p_{1}^{(0)}(\delta t)\big)&=\lo(\delta t).
\end{align}
Expanding \cref{eq: cl to cl reT 13} we thus find
\begin{align}\label{eq:lidbald 3rd}
\begin{split}
\cM_{\cl  \to \cl}^{\delta t,k} (\rho_{\cl}) =&\, \rho_{\cl}\otimes \proj{k}_\reT+  \bigg[\rho_\cl, \sum_{j=0}^{n_1}\left(c_j^\textup{R} H_j+c_j^\textup{I} K_j\right) \bigg]\otimes \proj{k}_\reT \delta t+ \bigg\{\rho_\cl, \sum_{j=0}^{n_1}\left(c_j^\textup{I} H_j+c_j^\textup{R} K_j\right) \bigg\}\otimes \proj{k}_\reT\delta t\\
& + \sum_{j=n_1+1}^{\NQ }\left(L_j\rho_{\cl}L_j^\dag\otimes \proj{k}_\reT+J_j\rho_{\cl}J_j^\dag\otimes \proj{k+1}_\reT\right)\delta t + \lo(\delta t),
\end{split}
\end{align}
where $c_j^\textup{R}$, $c_j^\textup{I}$ are the real and imaginary parts of $c_j$ respectively. Now observe that by defining Kraus operators $H:= \sum_{j=0}^{n_1}\left(c_j^\textup{R} H_j+c_j^\textup{I} K_j\right)$, $K:=  \sum_{j=0}^{n_1}\left(c_j^\textup{I} H_j+c_j^\textup{R} K_j\right)$ one can exchange \cref{eq:N j 0 1st,eq:N j 1 1st} with
\begin{equation}\label{eq:N j 0 2nd}
\begin{aligned}
&N_0^{(0)}(0,\delta t) =\id+(-\mi H+K)\delta t,\quad &\Big (\, N_j^{(0)}(0,\delta t)=L_j \sqrt{\delta t}\, \,\Big)_{j=1}^{\NQ },&\\
&N_0^{(0)}(1,\delta t)=0,\quad &\Big (\, N_j^{(0)}(1,\delta t)=J_j \sqrt{\delta t}\, \,\Big)_{j=1}^{\NQ },&
\end{aligned}
\end{equation}
and obtain the same solution as \cref{eq:lidbald 3rd} to order $\lo(\delta t)$ up to a relabelling of the summation indices obtaining
\begin{align}\label{eq:MC to CT gens}
\begin{split}
\cM^{\delta t,k}_{\cl \to \cl \reT}(\rho_\cl)
= &\,\rho_\cl \otimes \proj{k}_\reT - \delta t \Biggl(  \mi [H, \rho_\cl]  -\{K, \rho_\cl\} - \sum_{j=1}^\m L_j \rho_\cl L_j^{\dagger}\Biggr) \otimes \proj{k}_\reT \\ &+ \,\delta t \sum_{j=1}^\m J_j \rho_\cl J_j^{\dagger} \otimes \proj{k+1}_\reT + \lo\left(\delta t\right)
\end{split}
\end{align}
for all $ k=0,1,\ldots,N_T;\, t\geq 0$; $\rho_\cl\in\mathcal{S}\left(\mathcal{H}_\cl\right)$. Furthermore, taking into account the normalisation of the Kraus operators [\cref{eq:normalisation Krouas ops}], from \cref{eq:N j 0 2nd} we obtain a solution for $K$, namely
\begin{align}
K= -\frac{1}{2} \sum_{j=1}^\m \left(L^{\dagger}_j L_j +  J^{\dagger}_j J_j\right). 
\label{eq: K value}
\end{align}
In the case of a periodic register, \cref{eq:M 3 term} in the lemma statement follows by pugging in \cref{eq: K value} into \cref{eq:MC to CT gens}. \Cref{eq:recursion relation in lemma} in the lemma statement follows by recalling \cref{eq:k channel notation def} and using \cref{eq:MC to CT gens}. By applying the divisibility of the channel [condition 2), \cref{eq_extension CR to CR}] recursively $N\in\nnp$ times we find 
\begin{align}
\left( \cM^{ t/N}_{\cl\reT \to \cl \reT} \right)^{\comP N} = \cM^{ t}_{\cl\reT \to \cl \reT},
\end{align}
for all $t\geq 0$. \Cref{eq:lemma 1 stement channel at time} then follows by recalling the notation \cref{eq:k channel notation def} and taking the limit $N\to+\infty $. This concludes the ``only if'' part of the lemma for a periodic register.\Mspace

Finally, to verify the converse part of the lemma in the case of a periodic register, one simply has to check that for all Hermitian operators $H$ and families of operators $(L_j)_{j=1}^m$, $(J_j)_{j=1}^m$ acting on $\bounded(\mathcal{H}_\cl)$, the equations in the lemma statement satisfy the conditions 1) to 4) in \cref{sec:new quantum clock def}. We do this in the following. For conciseness, we will refer to the sequence of such operators on $\bounded(\mathcal{H}_\cl)$ by $\mathcal{D}= \big(  H, (L_j)_j, (J_j)_j\big)$.

To verify that 1) [\cref{eq:clock invariance}] holds for all $\mathcal{D}$, first note that 
$\tr_\reT\!\left[ \left(\cM^{\tau}_{\cl\reT \to \cl \reT}\right)^{\comP l}\! \circ \cM^{ \tau, k}_{\cl \to \cl \reT}(\rho_\cl)\ketbra{k+m}{k+m}\right]$ is $k$ independent (up to an order $\lo(\tau)$ term) for all $l\in\nnp$, $m\in\zz$, $k=0,1,\ldots, N_T$ s.t. $k+m=0,1,\ldots, N_T$, for all $\tau\geq 0$; $\rho_\cl\in\mathcal{S}\left(\mathcal{H}_\cl\right)$; and $\mathcal{D}$. Hence condition 1) [\cref{eq:clock invariance}] follows by choosing $\tau=t/N$, $l=N-1$ and taking the $N\to\infty$ limit such that the $\lo\left(\tau\right)$ terms vanish.

To verify that \cref{eq_extension CR to CR} in condition 2) holds for all $\mathcal{D}$, one needs to verify that
\begin{align}
\lim_{N\to\infty} \left(\cM^{(t_1+t_2)/N}_{\cl\reT\to\cl\reT}\right)^{\comP N} \left(\rho_\cl\otimes\ketbra{k}{k}\right)
\end{align} and 
\begin{align}
\lim_{N_1\to\infty} \lim_{N_2\to\infty} \left(\cM^{t_1/N_1}_{\cl\reT\to\cl\reT}\right)^{\comP N_1} \circ \left(\cM^{t_2/N_2}_{\cl\reT\to\cl\reT}\right)^{\comP N_2} \left(\rho_\cl\otimes\ketbra{k}{k}\right)\label{eq:lm N in rppof double}
\end{align}
are equal for all $t_1, t_2\geq 0$; $k=0,1,\ldots, N_T$; $\rho_\cl\in\mathcal{S}\left(\mathcal{H}_\cl\right)$; and $\mathcal{D}$.
 To do so, we first note by explicit calculation using \cref{eq:M 3 term} that $ \cM^{t_1/N}_{\cl\reT\to\cl\reT} \circ \cM^{t_2/N}_{\cl\reT\to\cl\reT} \left(\rho_\cl\otimes\ketbra{k}{k}\right)= \cM^{(t_1+t_2)/N}_{\cl\reT\to\cl\reT} \left(\rho_\cl\otimes\ketbra{k}{k}\right)+\lo\left(1/N\right)$. Therefore
\begin{align}
\quad&\lim_{N\to\infty} \left(\cM^{(t_1+t_2)/N}_{\cl\reT\to\cl\reT}\right)^{\comP N} \left(\rho_\cl\otimes\ketbra{k}{k}\right)\\
& = \lim_{N\to\infty} \left(\cM^{t_1/N}_{\cl\reT\to\cl\reT} \circ \cM^{t_2/N}_{\cl\reT\to\cl\reT} +\lo(1/N)\right)^{\comP N} \left(\rho_\cl\otimes\ketbra{k}{k}\right)\\
& = \lim_{N\to\infty} \left(\cM^{t_1/N}_{\cl\reT\to\cl\reT} \circ \cM^{t_2/N}_{\cl\reT\to\cl\reT}\right)^{\comP N}\left(\rho_\cl\otimes\ketbra{k}{k}\right)\\ &\quad\, + N\lo(t/N)+(N-1)\lo(t/N)^2+(N-2)\lo(t/N)^3+\ldots +(N-(N-1))\lo(t/N)^N\\
& = \lim_{N\to\infty} \left(\cM^{t_1/N}_{\cl\reT\to\cl\reT} \circ \cM^{t_2/N}_{\cl\reT\to\cl\reT}\left(\rho_\cl\otimes\ketbra{k}{k}\right)\right)^{\comP N}\\ & \quad\,+ N\lo(t/N)+(N-1)^2\lo(t/N)^2\\
&= \lim_{N\to\infty} \left(\cM^{t_1/N}_{\cl\reT\to\cl\reT}\right)^{\comP N} \circ \left(\cM^{t_2/N}_{\cl\reT\to\cl\reT}\right)^{\comP N} \left(\rho_\cl\otimes\ketbra{k}{k}\right)
\end{align}
which is equal to \cref{eq:lm N in rppof double} due to continuity. 
The confirmation that \cref{eq:id channel clockwork a,eq:id channel clockwork b} in condition 3) hold for all $\mathcal{D}$, follows straightforwardly from \cref{eq:M 3 term}:
\begin{align}
&\lim_{t\to 0^+} \lim_{N\to\infty}   \left(\cM^{t/N}_{\cl\reT\to\cl\reT}
\right)^{\comP N}\left(\rho_\cl\otimes\ketbra{k}{k}\right) \\
&= \lim_{t\to 0^+} \cM^{t}_{\cl\reT\to\cl\reT}\left(\rho_\cl\otimes\ketbra{k}{k}\right) = \rho_\cl\otimes\ketbra{k}{k}, 
\end{align} 
for all $\rho_\cl\in\mathcal{S}\left(\mathcal{H}_\cl\right)$, $\mathcal{D}$  and where in the second equality we used the Markovianity of the channel (which we have just proven) and the penultimate line uses \cref{eq:M 3 term}.

Finally, the verification that condition 4) holds for all $\mathcal{D}$ is straightforward. Plugging in \cref{eq:M 3 term} into definition \cref{eq:tilde p k l delta} and proceeding similarly to as in the above equation, one finds
\begin{align}
\lim_{t\to 0^+} \frac{\sum_{\underset{l\notin  \{k,  f(k)\}}{l=0\qquad}}^{N_T} \tilde p_{l}^{(k)}(t)}{\tilde p_{f(k)}^{(k)}(t)}= \lim_{t\to 0^+} \frac{\lo(t)}{c \,t }=0,
\end{align}
for all $k=0,1,\ldots,N_T$; $\rho_\cl\in\mathcal{S}\left(\mathcal{H}_\cl\right)$; and $\mathcal{D}$,  where $c> 0$ is a constant. This concludes the proof of the converse part of the lemma for the case of a periodic register. We now proceed to the case of a cut-off register.

We have just proven that $\cM_{\cl \reT \to \cl \reT}$ is the channel of a ticking clock with a classical register of the periodic type, iff it has the form stated in the lemma. Therefore a ticking clock with a classical register of the cut-off type can only satisfy the 1st part of condition 5) (\cref{eq:condition 5 a}) iff the ticking clock $\tilde \cM_{\cl \reT \to \cl \reT}$ in \cref{eq:condition 5 a} is of the form of that in the lemma statement for $k=0,1,\ldots, N_T-1$. Since \cref{eq:M 3 term} is the same for both cut-off and periodic register types, for $k=0,1,\ldots, N_T-1$, this holds true. 
Furthermore, by direct calculation of \cref{eq:M 3 term} in the case of a cut-off register and $k=N_T$, we see that it satisfies \cref{eq:condition 5 b} for $k=N_T$. While \cref{eq:M 3 term} in the case of a cut-off register and $k=N_T$ is clearly not necessary for it to satisfy \cref{eq:condition 5 b} for $k=N_T$, it is necessary to satisfy \cref{eq:condition 5 b} for $k=N_T$ up to \clock{} equivalence (\cref{def:equivalence}). This can be verified by noting that \cref{eq:condition 5 b} for $k=N_T$ implies that the state of the register and clockwork must be a product state up to order $\lo(\delta t)$.

We have thus far verified that condition 5) holds, up to \clock{} equivalence, for a ticking clock with a classical register of the cut-off type iff \cref{eq:M 3 term} in the lemma statement holds. By definition of a ticking clock (\cref{def_quantumclock}), we only need to verify that condition 2) [\cref{eq_extension CR to CR}] holds for a ticking clock with a cut-off register. The case \cref{eq_extension CR to CR} is verified analogously to the periodic register case above.
\end{proof}

\subsection{Proof of \cref{prop:autonmous Lindbald form}}\label{sec: proof of autonmous Lindbald form prop}
\ExplicitTickingClockForm*
\begin{proof}
The proposition follows straightforwardly from a more technical representation (\cref{lem_clockgenerators}) discussed in \app{} \cref{Sec:Implicit Ticking Clock Representation}. Specifically, if one expands to leading order in $t$ the channel $\cM_{\cl \to \cl\reT}^{t,k}$ for $k=0,1,\ldots, N_T$ using \cref{eq:clock dynamics landblad}, one finds \cref{eq:M 3 term}. Furthermore, since \cref{eq:clock dynamics landblad} is manifestly Markovian, the expression \cref{eq:lemma 1 stement channel at time} also holds. Since it was established in \cref{lem_clockgenerators}, that this form of the channel is both necessary and sufficient for the channel to be a ticking clock (\cref{def_quantumclock}), we conclude the proof of the proposition.
\end{proof}

\subsection{Proof of \cref{prop:clockwork channel}}\label{sec: proof of clockwork channel prop}
\clockchannel*
\begin{proof}
Consider a ticking clock $(\rho_{\cl\reT}^0, (\cM^{t}_{\cl \reT \to \cl \reT})_{t\geq 0})$ with a classical register given by the following expression for all $t\geq 0$ and $k=0,1,\ldots,N_T$:
\begin{subequations}\label{eq:prop clockwork channel}
	\begin{align}\label{eq:lemma 1 stement channel at time clockchannel}
	\begin{split}
	\cM^{t,k}_{\cl \to \cl \reT}(\rho_\cl^0)
	&=
	\! \lim_{\substack{N\to+\infty \\ N\in\nn}} \! \left( \cM^{t/N}_{\cl\reT \to \cl \reT} \right)^{\!\comP(N-1)}\!\! \circ \cM^{ t/N, k}_{\cl \to \cl \reT}(\rho_\cl^0),
	\end{split}
	\end{align}	
	where
	\begin{align}
	\cM^{t/N,\,k}_{\cl \to \cl \reT}(\cdot)
	= \,&(\cdot) \otimes \proj{k}_\reT + \left(\!\frac{t}{N}\!\right)\mathcal{C}_{(1)}(\cdot) \otimes\proj{k}_\reT
	+ \left(\!\frac{t}{N}\!\right)\mathcal{C}_{(2)}(\cdot) \otimes \proj{k\!+\!1}_\reT
	,\label{eq:M 3 term clockchannel}
	\end{align}
	with
	\begin{align}
	\mathcal{C}_{(1)}(\cdot)&:=  - \mi \,[H, \cdot]  - \sum_{j=1}^\m \frac{1}{2} \{L^{\dagger}_j L_j + J^{\dagger}_j J_j, \cdot\} + L_j (\cdot) L_j^{\dagger},\label{eq:C 1 l clockchannel}\\
	\mathcal{C}_{(2)}(\cdot)&:= \sum_{j=1}^\m J_j (\cdot) J_j^{\dagger},\label{eq:C 2 l clockchannel}
	\end{align}
\end{subequations}
and where $H\in \bounded(\mathcal{H}_\cl)$ is Hermitian and $(L_j)_j$, $(J_j)_j$ are arbitrary operators in  $\bounded(\mathcal{H}_\cl)$ and 
\begin{align}\label{eq:defs of basis cut-off and periodic}
\ket{l}_\reT:= \begin{cases}
\ket{l \text{ mod. } N_T+1}_\reT &\mbox{ for } l\in \nn \text { in the periodic register case.}\\
\ket{N_T}_\reT &\mbox{ for } l= N_T, N_T+1, N_T+2, \ldots \text { in the cut-off register case.}
\end{cases}
\end{align}
To start with, we have to justify that the above channel is a ticking clock. In the case of a periodic register, this is obvious since it is identical to the channel in \cref{lem_clockgenerators}. In the case of the cut-off register, it clearly satisfies condition 5) [\cref{eq:condition 5}]. Thus the only condition remaining to conclude that it is indeed a representation of a ticking clock with a classical cut-off register, is condition 2) [\cref{eq_extension CR to CR}]. There are numerous ways to show this, the most direct is to note by direct substitution that the following Lindbladian is a generator for the channel:
\begin{align}
\mathcal{L}_{\cl\reT}':=  -\mi[\tilde{H},(\cdot)] + \sum_{j=1}^\m  \tilde{L}_{j} (\cdot) \tilde{L}^\dagger_{j} - \sum_{j=1}^\m\frac{1}{2} \big\{ \tilde{L}^\dagger_{j} \tilde{L}_{j}, (\cdot) \big\} + \sum_{l=0}^{N_T}\sum_{j=1}^\m \bar{J}_j^{(l)} (\cdot) \bar{J}_j^{(l)\dagger} - \sum_{l=0}^{N_T}\sum_{j=1}^\m\frac{1}{2} \big\{ \bar{J}_j^{(l)\dagger} \bar{J}_j^{(l)}, (\cdot) \big\},\label{}
\end{align}
where, as before $\tilde H:=H \otimes \id_\reT$, $\tilde L_j:=L_j \otimes \id_\reT$, and the new tick generators are
\begin{align}
\bar J_j^{(l)}:= \begin{cases}
J_j\otimes \ketbra{l+1}{l}_\reT &\mbox{ for } l=0,1,2,\ldots,N_T-1\\
J_j\otimes \ketbra{N_T}{N_T}_\reT &\mbox{ for } l=N_T.
\end{cases}
\end{align}
What is more, even in the cut-off register case, \cref{eq:prop clockwork channel} is the same as that in \cref{lem_clockgenerators} up to \clock{} equivalence. 

Now that we have justified the form of the channel \cref{eq:prop clockwork channel}, we proceed to calculate its \clock{} channel as per the definition \cref{eq:clockwork channel def}:

\begin{align}
\cM^{t}_{\cl \to \cl }(\cdot) &= \lim_{\substack{N\to+\infty \\ N\in\nn}} \tr_\reT\left[ \left( \cM^{t/N}_{\cl\reT \to \cl \reT} \right)^{\!\comP(N-1)}\!\! \circ \cM^{ t/N, k}_{\cl \to \cl \reT}(\cdot)  \right]. \label{eq:M c to c secondary}
\end{align}
Furthmore, observe that one has the following expansion
\begin{align}
\left( \cM^{\delta t}_{\cl\reT \to \cl \reT} \right)^{\!\comP(m-1)}\!\! \circ \cM^{ \delta t, k}_{\cl \to \cl \reT}(\cdot) = \sum_{l=0}^{m} M_m^{(l,\delta t)}(\cdot) \otimes \proj{l+k}_\reT,\label{eq:expensional trial}
\end{align}
for some $M_m^{(l,\delta t)}\in\bounded(\mathcal{H}_\cl)$ which may be $k$-dependent. To see {that} a solution of the form \cref{eq:expensional trial} exists, note that every application of the channel \cref{eq:M 3 term clockchannel} only contains terms which either keep the support of the register the same, i.e. has support on $\proj{k}_\reT$, or increases by one, i.e. has support on $\proj{k+1}_\reT$. Furthermore, the summation ranges from $0$ to $m$ after $m$ applications of the channel, which follows easily inductively.

Hence by comparing \cref{eq:expensional trial,eq:M c to c secondary} one sees that to prove that $\cM^t_{\cl \to\cl}$ is $k$-independent,  it suffices to show that the channels $\Big( M_m^{(l,\delta t)}(\cdot)\Big)_{l=0}^{m}$ are $k$-independent for all $m\in\nnp$, $\delta t\geq 0$.

This is most easily shown by induction. We start by showing that $\Big( M_{1}^{(l,\delta t)}(\cdot)\Big)_{l=0}^{1}$ are $k$-independent by equating
\begin{align}
\cM^{ \delta t, k}_{\cl \to \cl \reT}(\cdot) = \sum_{l=0}^{1} M_1^{(l,\delta t)}(\cdot) \otimes \proj{l+k}_\reT
\end{align}
with \cref{eq:C 1 l clockchannel,eq:C 2 l clockchannel} to find $M_1^{(0,\delta t)}= \idch_\cl + \delta t\,\mathcal{C}_{(1)}(\cdot)$, $M_1^{(1,\delta t)}=  \delta t\,\mathcal{C}_{(2)}(\cdot)$. Therefore, $\Big( M_{1}^{(l,\delta t)}(\cdot)\Big)_{l=0}^{1}$ are $k$-independent since $\mathcal{C}_{(1)}$ and $\mathcal{C}_{(2)}$ are. Now assume  $\Big( M_m^{(l,\delta t)}(\cdot)\Big)_{l=0}^{m}$ are $k$-independent. We show that it follows that 
 $\Big( M_{m+1}^{(l,\delta t)}(\cdot)\Big)_{l=0}^{m+1}$  are $k$-independent:
\begin{align}
&\sum_{l=0}^{m+1} M_{m+1}^{(l,\delta t)}(\cdot) \otimes \proj{l+k}_\reT \label{eq:sum M proof  line 1}\\
& = \left( \cM^{\delta t}_{\cl\reT \to \cl \reT} \right)^{\!\comP m}\!\! \circ \cM^{ \delta t, k}_{\cl \to \cl \reT}(\cdot) \\
&=\cM^{\delta t}_{\cl\reT \to \cl \reT} \circ \left( \sum_{l=0}^{m} M_m^{(l,\delta t)}(\cdot) \otimes \proj{l+k}_\reT  \right)\\
& = \sum_{l=0}^{m} M_1^{(0,\delta t)} \circ M_m^{(l,\delta t)}(\cdot) \otimes \proj{l+k}_\reT  +   M_1^{(1,\delta t)} \circ M_m^{(l,\delta t)}(\cdot) \otimes \proj{l+k+1}_\reT.\label{eq:sum M proof  line 4}
\end{align}
Therefore, since $M_1^{(0,\delta t)}$, $M_1^{(1,\delta t)}$ are manifestly $k$-independent and $M_m^{(l,\delta t)}$ are $k$-independent by assumption, it follows by equating terms in lines \eqref{eq:sum M proof  line 1} and \eqref{eq:sum M proof  line 4}, that  $\Big( M_{m+1}^{(l,\delta t)}(\cdot)\Big)_{l=0}^{m+1}$  are $k$-independent. Hence by induction, we conclude that $\Big( M_{m}^{(l,\delta t)}(\cdot)\Big)_{l=0}^{m}$ are $k$-independent for all $m\in\nnp$ and thus that \cref{eq:M c to c secondary} is $k$-independent also.

Observe that the only difference between the periodic register case and the cut-off register case, are the kets $\ket{l}_\reT$, which are defined in \cref{eq:defs of basis cut-off and periodic}. Since $\tr[\proj{l}_\reT]=1$ for all $l\in\nno$ in both cases, we conclude that \cref{eq:M c to c secondary} is the same in both cases.

Finally, proceeding similarly to the above inductive proof, we observe that
\begin{align}
\cM^{t}_{\cl \to \cl }(\cdot) &= \lim_{\substack{N\to+\infty \\ N\in\nn}} \left( \cM^{t/N}_{\cl \to \cl} \right)^{\!\comP N},
\label{eq:M c to c final}
\end{align}
where 
\begin{align}
\left(\cM^{ t/N}_{\cl \to \cl}\right)^{\!\comP N}(\cdot) = \sum_{l=0}^{N} M_N^{(l,\,t/N)}(\cdot).
\end{align}
It is now straightforward to verify that the above channel constitutes a dynamical semigroup thus admitting a generator representation of the form $\cM^{t}_{\cl \to \cl }(\cdot)=\me^{t \mathcal{L}_\cl}(\cdot)$ with 
\begin{align}
\mathcal{L}_\cl(\cdot)= \lim_{t\to 0^+} \frac{\cM^{t}_{\cl \to \cl } -\idch_{\cl\reT}}{t}   =\lim_{t\to 0^+} \frac{M_1^{(0,t)}+M_1^{(1,t)}-\idch_{\cl\reT}}{t}= \mathcal{C}_{(1)}(\cdot)+\mathcal{C}_{(2)}(\cdot)\,.
\end{align}
Thus $\mathcal{L}_\cl(\cdot)$ is equal to the r.h.s. of \cref{eq:clock dynamics landblad eq} under the replacements $\tilde H\mapsto H$, $\tilde L_j\mapsto L_j$ and $ \tilde J_j^{(l)}\mapsto J_j$.
\end{proof}
 
\subsection{Proofs of \cref{lem: p norm equiv,lem: up bound on m,}}
  
\begin{lemma}[Entry-wise and $p$-norm equivalence]\label{lem: p norm equiv}
Let the complex finite dimensional matrix $A\in\cc^l\times \cc^m$ have entries denoted by $A_{qr}\in\cc$. Let $\| \cdot \|_p$ denote the operator norm on $\cc^l\times \cc^m$ induced by the vector $p$-norm on vector in $\cc^m$. Let $\lo(\delta)$ denote ``little o'' notation for some limit $\delta\to a$. It follows that 
\begin{align}
A_{qr}=\lo(\delta)
\end{align}
for all $q=1,2,3,\ldots l; r=1,2,3,\ldots,m$ if and only if 
\begin{align}
\| A\|_p=\lo(\delta).
\end{align}
The statement holds for any $p>0$.
\end{lemma}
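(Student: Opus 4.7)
The plan is to exploit the fact that, in finite dimensions, the quantity $\|A\|_p$ is equivalent (up to dimension-dependent multiplicative constants) to the entry-wise maximum $\|A\|_{\max}:=\max_{q,r}|A_{qr}|$. Once this is established, the two ``little-o'' statements become equivalent, since $o(\delta)$ conditions are invariant under bounding above and below by constant multiples.

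Concretely, I would proceed in two steps. First, for the implication $(\|A\|_p=\lo(\delta)) \Rightarrow (A_{qr}=\lo(\delta))$, I use the test vectors $e_r$: since $\|e_r\|_p=1$, one has $\|A e_r\|_p\leq \|A\|_p$, and the $q$-th entry of $A e_r$ is precisely $A_{qr}$, from which
\begin{align}
|A_{qr}|\leq \|A e_r\|_p \leq \|A\|_p.
\end{align}
Hence $\|A\|_p=\lo(\delta)$ forces every $|A_{qr}|=\lo(\delta)$. Second, for the reverse implication, I use the pointwise bound
\begin{align}
\|Ax\|_p \leq C_{l,m,p}\,\|A\|_{\max}\,\|x\|_p,
\end{align}
which follows from $|(Ax)_q|\leq m\,\|A\|_{\max}\|x\|_\infty$ and the finite-dimensional inequality relating $\|\cdot\|_\infty$ and $\|\cdot\|_p$; the constant $C_{l,m,p}>0$ depends only on $l$, $m$, $p$. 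Taking the supremum over unit-$p$-norm vectors gives $\|A\|_p\leq C_{l,m,p}\,\|A\|_{\max}$, so entry-wise $\lo(\delta)$ immediately yields $\|A\|_p=\lo(\delta)$.

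The only mild subtlety is that for $0<p<1$ the map $\|\cdot\|_p$ on vectors is merely a quasi-norm, so ``operator norm'' has to be interpreted as $\|A\|_p:=\sup_{x\neq 0}\|Ax\|_p/\|x\|_p$. This does not affect the argument: both inequalities above still go through because they only use the homogeneity of $\|\cdot\|_p$ and the pointwise estimate on $|(Ax)_q|$, neither of which requires the triangle inequality. Since all the bounds are finite and independent of $\delta$, the equivalence of the two $\lo(\delta)$ statements follows. I do not anticipate a substantive obstacle here; the result is essentially the equivalence of norms on a finite-dimensional vector space written out explicitly.
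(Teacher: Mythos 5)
Your proof is correct and takes essentially the same approach as the paper: both rely on the test vector $e_r$ (or $v_r=\delta_{r,t}$ in the paper's notation) to lower-bound $\|A\|_p$ by $|A_{qr}|$, and on finite-dimensional norm equivalence for the other direction. The only differences are stylistic — you argue the implication $\|A\|_p=\lo(\delta)\Rightarrow A_{qr}=\lo(\delta)$ directly, whereas the paper uses proof by contradiction, and you spell out the constant $C_{l,m,p}$ and the $0<p<1$ quasi-norm caveat that the paper leaves implicit.
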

\begin{proof}
	Given the expression for the operator norm, namely
\begin{align}
\| A\|_p= \sup_{v\in\rr^m; \,\|v\|_p\leq 1} \left(\, \sum_{q=1}^l \left|\sum_{r=1}^m A_{qr} v_r \right|^p\,\,\right)^{1/p},
\end{align} 
the direction $A_{qr}=\lo(\delta) \,\,\forall q,r \implies \| A\|_p=\lo(\delta)$ follows easily.  To prove the converse, we will use proof by contradiction.
Suppose $\| A\|_p=\lo(\delta)$, and by contradiction, assume that there exists matrix entry $A_{st}$ s.t. $A_{st}\neq \lo(\delta)$. Therefore, 
\begin{align}\label{eq: A coe contra}
\lim_{\delta\to a} \frac{|A_{st}|}{\delta}>0.
\end{align}
We can now use the definition of the operator norm to achieve the lower bound
\begin{align}
\| A\|_p \geq  \left(\, \sum_{q=1}^l \left|\sum_{r=1}^m A_{qr} \delta_{r,t} \right|^p\,\,\right)^{1/p}=  \left(\, \sum_{q=1}^l \left|A_{qt} \right|^p\,\,\right)^{1/p} \geq |A_{st}|.
\end{align}
Therefore, dividing both sides by $\delta$ followed by taking the limit $\delta\to a$ we achieve using \cref{eq: A coe contra} that $\lim_{\delta\to a} \frac{\|A\|_p}{\delta}>0$. This contradicts the assertion that $\| A\|_p=\lo(\delta)$.
\end{proof}

\begin{lemma}[Maximum number of Lindblad operators needed]\label{lem: up bound on m} Consider a \clock{} of Hilbert space dimension $d\in\nnp$. For every Hermitian operator $H$ and two finite sequences of operators $(L_j)_{j=1}^\m$, $(J_j)_{j=1}^\m$ on $\bounded(\mathcal{H}_\cl)$ giving rise to the channel $\cM^{t,k}_{\cl \to \cl \reT}(\cdot)$ via \cref{eq:lemma 1 stement channel at time}; there exits $2 (d^2-1)$ new operators $(L_j')_{j=1}^{d^2-1}$, $(J_j')_{j=1}^{d^2-1}$ on $\bounded(\mathcal{H}_\cl)$ such that the channel $\cM^{t,k}_{\cl \to \cl \reT}(\cdot)$ is invariant under the mappings
\begin{align}
\sum_{j=1}^\m -\frac{1}{2} \{L^{\dagger}_j L_j + \theta(k)J^{\dagger}_j J_j, \cdot\} + L_j (\cdot) L_j^{\dagger}\, &\mapsto \, \sum_{j=1}^{d^2-1} -\frac{1}{2} \{L^{\prime\dagger}_j L_j' + \theta(k)J^{\prime\dagger}_j J_j', \cdot\} 
+ L_j' (\cdot) L_j^{\prime\dagger}\\
  \sum_{j=1}^\m J_j (\cdot) J_j^{\dagger}\, &\mapsto \, \sum_{j=1}^{d^2-1} J_j' (\cdot) J_j^{\prime\dagger}
\end{align}	
in \cref{eq:M 3 term,eq:C 2 l} respectively.	
\end{lemma}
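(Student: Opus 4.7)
My plan is to reduce the invariance statement to two independent Kraus-representation problems and then apply Choi's theorem together with the standard GKSL reduction to traceless operators. First, I would observe that the channel $\cM^{t,k}_{\cl \to \cl \reT}$, as presented in \cref{eq:lemma 1 stement channel at time,eq:M 3 term}, depends on the Lindblad data only through the two completely positive maps
\begin{align}
\Phi_L(\cdot) := \sum_{j=1}^\m L_j(\cdot)L_j^\dagger, \qquad \Phi_J(\cdot) := \sum_{j=1}^\m J_j(\cdot)J_j^\dagger,
\end{align}
together with the positive operators $\sum_j L_j^\dagger L_j$ and $\sum_j J_j^\dagger J_j$ appearing in the anticommutator of $\mathcal{C}_{(1,k)}$. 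The latter two are however determined by the former: the unitary-freedom theorem for Kraus representations asserts that any two Kraus sets $\{L_j\}$, $\{L_i'\}$ with $\sum_j L_j(\cdot)L_j^\dagger = \sum_i L_i'(\cdot)L_i^{\prime\dagger}$ are related by a rectangular isometry $V$ via $L_j = \sum_i V_{ji} L_i'$, from which $\sum_j L_j^\dagger L_j = \sum_i L_i^{\prime\dagger} L_i'$ follows directly. Hence it suffices to produce, independently for $\Phi_L$ and for $\Phi_J$, a Kraus representation with the claimed cardinality.

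Second, I would invoke Choi's theorem: the minimum number of Kraus operators in any representation of a CP map on $\bounded(\mathcal{H}_\cl)$ equals the rank of its Choi matrix, a positive semi-definite operator on $\mathcal{H}_\cl \otimes \mathcal{H}_\cl$, and is therefore bounded above by $d^2$ with $d := \dim\mathcal{H}_\cl$. Diagonalizing the Choi matrices of $\Phi_L$ and $\Phi_J$ separately yields minimal Kraus representations of size at most $d^2$ each.

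Third, to sharpen $d^2$ to $d^2 - 1$, I would invoke the standard GKSL reduction to the traceless subspace. Fix a Hilbert--Schmidt-orthonormal Hermitian basis $\{F_\alpha\}_{\alpha=0}^{d^2-1}$ of $\bounded(\mathcal{H}_\cl)$ with $F_0 = \id/\sqrt{d}$ and $F_\alpha$ traceless for $\alpha \geq 1$, and expand each reduced Kraus operator in this basis. The contribution of the $F_0$ component to the dissipator $L(\cdot)L^\dagger - \tfrac{1}{2}\{L^\dagger L, \cdot\}$ is purely a commutator in the argument: the $F_0$--$F_0$ diagonal vanishes identically, while the $F_0$--$F_\alpha$ cross terms regroup into a single commutator $[A,\cdot]$ for some Hermitian $A$. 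Absorbing this commutator into a redefinition of $H$---the standard implicit freedom of the Lindblad representation---allows me to take the new operators $L_j'$ and $J_j'$ traceless, i.e.\ supported in the $(d^2-1)$-dimensional traceless subspace. Diagonalizing the coefficient matrices on this subspace and padding with zero operators if necessary then gives sequences of length exactly $d^2-1$.

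The main obstacle is the third step, since the lemma statement formally keeps $H$ fixed: the commutator contributions arising from the identity components of the $L_j$ and of the $J_j$ appear only in $\mathcal{C}_{(1,k)}$ (the structure of $\mathcal{C}_{(2,k)} = \theta(k)\Phi_J$ is already untouched by the reduction, since it only involves $\Phi_J$ itself), and I would need to verify that these contributions consolidate into a single Hamiltonian-type shift that leaves the full channel $\cM^{t,k}_{\cl \reT \to \cl \reT}$ invariant. I expect the cleanest route is to interpret the lemma modulo this implicit Hamiltonian redefinition; absent that permission, Choi's theorem on its own gives the marginally weaker bound of $d^2$ operators each, which is equally sufficient for every subsequent use of the lemma in the paper.
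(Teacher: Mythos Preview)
Your core approach---reducing to Choi's theorem applied separately to the completely positive maps $\Phi_L$ and $\Phi_J$---is exactly what the paper does. The paper's proof simply recalls the Kraus form of the full channel, isolates the CP maps $\sum_j N_j^{(k)}(l,t)(\cdot)N_j^{(k)\dag}(l,t)$ for each register label $l$, and invokes Choi to replace each by a Kraus set of bounded cardinality; it does not attempt your step~3 at all.

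On the $d^2$ versus $d^2-1$ issue you raise: you are more careful than the paper here. The paper's proof writes the post-Choi Kraus sum with index range $j=0,\ldots,d^2-1$, i.e.\ $d^2$ operators, even though the lemma statement (and the sequences $(L_j')_{j=1}^{d^2-1}$ it promises) says $d^2-1$. So the discrepancy you flag is already present in the paper as an indexing inconsistency, and your instinct that Choi alone only yields $d^2$ is correct.

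Your attempt to close the gap via the GKSL traceless reduction is reasonable for the $L_j$ (at the cost of redefining $H$, as you note), but it genuinely fails for the $J_j$: the term $J_j(\cdot)J_j^\dag$ is coupled to $\proj{k+1}_\reT$ while $-\tfrac12\{J_j^\dag J_j,\cdot\}$ is coupled to $\proj{k}_\reT$, so an identity component $c\,\id$ in $J_j$ contributes $|c|^2(\cdot)\otimes(\proj{k+1}_\reT-\proj{k}_\reT)$ to the generator, which is not of the form $-\mi[H\otimes\id_\reT,\cdot]$ and hence cannot be absorbed into any Hamiltonian on $\cl$. Your closing remark---that $d^2$ is what Choi actually delivers and is sufficient for every downstream use---is the honest conclusion, and it matches what the paper's proof in fact establishes.
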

\begin{proof}
In the proof of \cref{lem_clockgenerators}, $\m$ is simply a non negative integer arising from writing an arbitrary implementation of the channel $\cM^{t,k}_{\cl \to \cl \reT}(\cdot)$ is Kraus form. To prove \cref{lem: up bound on m}, it will suffice to prove that without loss of generality, $\m$ in the proof of \cref{lem_clockgenerators} can be chosen to be equal to $d^2-1$. To do so, we start by recalling \cref{eq:classical reg def 2} in the proof of \cref{lem_clockgenerators}:
\begin{align}\label{eq:classical reg def 33}
\cM_{\cl  \to \cl  \reT}^{t,k} (\rho_{\cl }) = \sum_{l=0}^{N_T}\sum_{j=0}^{\NQ } N_j^{(k)}(l,t) \rho_\cl\, {N_j^{(k)}}^\dag(l,t)\otimes\ketbra{l}{l}_\reT,  
\end{align}	
for all $ k=0,1,\ldots,N_T;\, t\geq 0$. Recall also that
$N_j^{(k)}(l,t):= \bra{l}_\reT Q_j^{(k)}(t): \bounded(\mathcal{H}_\cl)\to\bounded(\mathcal{H}_\cl)$ where $Q_j^{(k)}(t): \bounded(\mathcal{H}_\cl)\to\bounded(\mathcal{H}_\cl\otimes\mathcal{H}_\reT)$ and
\begin{align}\label{eq:Q s equation}
\sum_{j=0}^\m {Q_j^{(k)}(t)}^\dag Q_j^{(k)}(t)=\id_\cl.
\end{align} 
Using the resolution of the identity, \cref{eq:Q s equation} implies 
\begin{align}\label{eq:N s equation}
\sum_{l=0}^{N_T} \sum_{j=0}^\m {N_j^{(k)}(l,t)}^\dag N_j^{(k)}(l,t)=\id_\cl.
\end{align} 
Thus since the basis $\{\ket{l}_\reT\}_{l=0}^{N_T}$ is orthogonal, the operators $\{{N_j^{(k)}(l,t)}\}_{j=0}^{\m}$ are completely arbitrary and independent from $\{{N_j^{(k)}(l',t)}\}_{j=0}^{\m}$ for all $l\neq l'$, $l=0,1,2,\ldots, N_T$; up to the normalisation imposed by \cref{eq:N s equation}. The lemma now follows directly from Choi's theorem~\cite{Choi}. To see this, note that the channel
\begin{align}\label{eq:N N dag l t}
\sum_{j=0}^{\NQ } N_j^{(k)}(l,t) \,(\cdot)\, {N_j^{(k)}}^\dag(l,t) : \bounded(\mathcal{H}_\cl) \to \bounded(\mathcal{H}_\cl),
\end{align}
is completely positive for all $\m\in\nnp$. Therefore, via Choi's theorem there exists operators $(N_j^{(k)\prime})_{j=1}^{d^2-1}$ such that 
\begin{align}\label{eq:N N dag l t 3}
\sum_{j=0}^{\NQ } N_j^{(k)}(l,t) \,(\cdot)\, {N_j^{(k)}}^\dag(l,t) = \sum_{j=0}^{d^2-1} N_j^{(k)\prime}(l,t) \,(\cdot)\, {N_j^{(k)\prime}}^\dag(l,t),
\end{align}
where 
\begin{align}\label{eq:N s equation 2}
\sum_{l=0}^{N_T} \sum_{j=0}^{d^2-1} {N_j^{(k)\prime}(l,t)}^\dag N_j^{(k)\prime}(l,t)=\id_\cl.
\end{align}
However, since the operators $\big(N_j^{(k)}\big)_{j=1}^{N_L}$ were arbitrary to begin with (up to the aforementioned normalisation which the operators $\big(N_j^{(k)\prime}(l,t)\big)_{j=1}^{d^2-1}$ also satisfy), we can always choose $N_L=d^2-1$ from the outset.
\end{proof}



\end{appendices}

\end{document}